\let\epsilon=\varepsilon
\algrenewcommand\algorithmicrequire{\textbf{Input:}}
\algrenewcommand\algorithmicensure{\textbf{Output:}}
\newcommand{\wid}{0.33}
\title{A Quantum Algorithm for Functions of Multiple Commuting Hermitian Matrices}
\renewcommand\@date{{
  \vspace{-\baselineskip}
  \large\centering
  
    \hspace*{-0.65cm}
    \begin{tabular}{p{\wid\textwidth}<{\centering} p{\wid\textwidth}<{\centering} p{\wid\textwidth}<{\centering}}
        Yonah Borns-Weil & Tahsin Saffat & Zachary Stier \\
        {\footnotesize \texttt{yonah\_borns-weil@berkeley.edu}} & {\footnotesize \texttt{tahsin\_saffat@berkeley.edu}} & {\footnotesize \texttt{zstier@berkeley.edu}}
    \end{tabular}
    
  \bigskip
  
  UC Berkeley Department of Mathematics\phantom{\textsuperscript{2}}
  
  \bigskip

  February 2023
}}
\begin{document}

\maketitle

\begin{abstract}
    Quantum signal processing allows for quantum eigenvalue transformation with Hermitian matrices, in which each eigenspace component of an input vector gets transformed according to its eigenvalue. In this work, we introduce the {\em multivariate quantum eigenvalue transformation} for functions of commuting Hermitian matrices. We then present a framework for working with polynomial matrix functions in which we may solve MQET, and give the application of computing functions of normal matrices using a quantum computer. 
\end{abstract}

\section{Introduction}\label{intro}

One fundamental matrix problem is the computation of matrix-valued functions. 
Specifically, we define the quantum eigenvalue transformation as follows:

\begin{center}
\begin{tabular}{p{0.85\textwidth}}
    \textbf{Quantum Eigenvalue Transformation (QET)}: {\em Let $A$ be a diagonalizable complex $N \times N$  matrix with eigenvalues $\{\lambda_k\}\subset\bD$ and corresponding eigenvectors $\lcr{\bv_k}$. Given a function $f:\C \to \C$, a state $\ket{\bv}=\sum c_k \ket{\bv_k}$, and $\eps>0$, output (a scaling of) the state $f(A) \ket{\bv}:=\sum c_k f(\lambda_j) \ket{\bv_k}$ to precision $\eps$.}
\end{tabular}
\end{center}

(Here, $\bD$ is the closed unit disk in $\C$.) QET may be implemented using the quantum signal processing algorithm (QSP) for a degree-$D$ polynomial $f$ and a Hermitian matrix $A$ with constant ancilla overhead and $O(D)$ instances of a quantum circuit implementation of $A$, for instance as a {\em block-encoding}, which comes with some prescribed additional ancillae and an intrinsic constant known as the {\em subnormalization factor} (see \secref{sec:circuit notation}); $f$ can be provided as a classical blackbox,\footnote{i.e., a blackbox from which we may query one input at a time. In principle $f$ could take time polynomial in the problem parameters to evaluate, but since this is all done on the classical side we treat it as a constant-time blackbox.} and $\ket{\bv}$ can be given via some prepare oracle $U$ of which we have only one copy.\footnote{i.e., we initialize all registers to a tensor of $\ket{0}$s and on $U$'s register apply it to obtain the input state $\ket{\bv}$, while $U$ acts on other states unpredictably.} As we will need $\ket{\bv}$'s prepare oracle exactly once and $f$'s blackbox is only accessed on the classical side, we are more concerned with minimizing the number of instances of $A$'s block-encoding. The QSP method requires that $f$ is $L^\infty$-bounded. 

QET is not well-understood for other classes of matrices. We propose the following generalization of QET that makes sense for several commuting matrices. One generic setting in which this would be of use is in the case of multiple commuting observables; it may be the case that one wishes to compute some function of several such observables, without having to compute individual measurements. 

In preparation for the definition, we remark that commuting matrices have a simultaneous eigenbasis, in particular commuting Hermitians have a simultaneous orthonormal eigenbasis. 

\begin{center}
\begin{tabular}{p{0.85\textwidth}}
    \textbf{Multivariate Quantum Eigenvalue Transformation (MQET)}: {\em Let $\bbA=\{A_{\ell}\}_{0\le\ell\le r}$ be a family of $r+1$ pairwise commuting diagonalizable complex $N \times N$  matrices. Let $\{\bv_k\}$ be a mutual eigenbasis of $\bbA$ and let $\bgl_k=(\lambda_{{\ell},k})_{0\le\ell\le r}\in\bD^{r+1}$ denote the vector of eigenvalues for the mutual eigenvector $\bv_k$. Given a function $f:\C^{r+1} \to \C$, a state $\ket{\bv}=\sum c_k \ket{\bv_k}$, and $\eps>0$, output (a scaling of) the state $f(\bbA) \ket{\bv}:=\sum c_k f(\bgl_k) \ket{\bv_k}$ to precision $\eps$.}
\end{tabular}
\end{center}

The matrices $A_\ell$ can be provided as block-encodings, the function $f:\C^{r+1}\to\C$ can be provided as a blackbox, and the state $\ket{\bv}$ can be provided by a single prepare oracle, all as in QET. 

To motivate our method for MQET, first consider the problem of QET for normal matrices. Suppose we want to compute QET for a normal matrix $M$ and function $f:\C \rightarrow \C$ that is $L^\infty$-bounded on the unit square. We retain the boundedness assumption from the case where $M$ is Hermitian because it occurs as a special case of the problem we are currently considering. Let us also make the additional assumption that $f$ is a polynomial. This is a mild assumption as any $L^\infty$-bounded function on the unit square can be approximated by $L^\infty$-bounded polynomials. Now, make the critical observation that any normal matrix $M$ decomposes as $A+iB$, where $A$ and $B$ are {\em commuting} Hermitian matrices. We look for a decompostion 
$$f(A+iB)\approx\sum_k P_k(A)Q_k(B)$$
for some polynomials $P_k,Q_k$. Then, we could implement $P_k(A)$ and $Q_k(B)$ using QSP and combine the products using the linear combination of unitaries primitive. The number of block-encoding instances required grows with the number of terms in the sum, and the subnormalization factor grows with the $L^\infty$ norms of the polynomials $P_k,Q_k$ appearing in the decomposition. A reasonable idea would be to let $P_k(x)=x^k$. Then, the number of terms in the sum is $1+ \deg f$, but the $L^\infty$ norm of $Q_k$ blows up rapidly. However, by letting $P_k(x)=T_k(x)$, the Chebyshev polynomial of degree $k$, we can obtain a decomposition where the number of terms in the sum is still $1+\deg f$, but $Q_k$ remains $L^\infty$-bounded. It is not clear in general how to compute the optimal decomposition of the polynomial $f$. 

The method just outlined for QET of normal matrices inspires our method for MQET of commuting Hermitian matrices. Observe that $f(A+iB)=g(A,B)$ where $g:[-1,1]^2 \rightarrow \C$ is a bivariate $L^\infty$-bounded polynomial. In general, we may compute MQET for a family $\bbA=\{A_{\ell}\}_{0\le\ell\le r}$ of pairwise commuting Hermitian matrices and $f:[-1,1]^{r+1} \rightarrow \C$ as follows. Compute a decomposition 
$$f(\bbA)\approx\sum_\bk P_\bk^{(0)}(A_0)P_\bk^{(1)}(A_1)\cdots P_\bk^{(r)}(A_r)$$
where $\bk$ is indexed by tuples $(a_i)_{0\le i<r}$ and $P_\bk^{(i)}=T_{a_i}$ for $0\le i<r$. The number of terms scales with the degree vector of $f$ and the the polynomials appearing in the decomposition remain $L^\infty$-bounded. 

We believe that this framework is promising and will have many applications beyond those provided in this work. 

Let $1+\deg f=D=2^d$. We present the following: 
\begin{theorem}
    Consider any pairwise commuting Hermitians $\bbA=\lcr{A_\ell}_{0\le\ell\le r}$. Given block-encodings $U_{A_\ell}\in\BE_{\ga,m}(A_\ell;\eps)$, target precision $\eps'>0$, and an ($r+1$)-variate polynomial $g$ approximating a given continuous $L^\infty$-bounded function $f:[-1,1]^{r+1}\to\C$ as $\norm{f-g}_{L^\infty[-1,1]^{r+1}}\in O(\eps')$, there is an algorithm (\href{alg:cap}{Algorithm 1}) for implementing a block-encoding of $f(\bbA)$ using $O(rD^{r+1})$ instances of the unitaries $U_{A_\ell}$, $O(r(m+d))$ additional ancillae, and $O\lpr{(D+2)^r}$ subnormalization factor. Applying this matrix to $\ket{\bv}$ solves the given MQET instance with precision $\eps'$. 
\end{theorem}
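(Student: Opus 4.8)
The plan is to realize the decomposition sketched in the introduction as an explicit circuit and then account for the three resources separately. First I would fix the mixed Chebyshev expansion of the approximating polynomial $g$: since $\deg g\le D-1$ in each coordinate and $\{T_0,\dots,T_{D-1}\}$ span the degree-$(D-1)$ polynomials, I can write
\[
g(x_0,\dots,x_r)=\sum_{\bk} T_{a_0}(x_0)\cdots T_{a_{r-1}}(x_{r-1})\,Q_\bk(x_r),\qquad \bk=(a_0,\dots,a_{r-1})\in\{0,\dots,D-1\}^r,
\]
where each residual $Q_\bk$ is a univariate polynomial of degree $\le D-1$ obtained by Chebyshev-projecting $g$ in its first $r$ arguments (orthogonality against $T_{a_i}(x_i)/\sqrt{1-x_i^2}$). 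Because the $A_\ell$ commute, $g(\bbA)$ is exactly the sum of products $\sum_\bk \prod_{i<r}T_{a_i}(A_i)\cdot Q_\bk(A_r)$.

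Next I would build each summand as a block-encoding. For the Chebyshev factors, running QSP on $U_{A_i}$ with the phase sequence realizing $T_{a_i}$ yields a block-encoding of $T_{a_i}(A_i)$ using $a_i$ instances of $U_{A_i}$ and $O(1)$ ancillae beyond the $m$ already present. For the last factor I would normalize $\tilde Q_\bk=Q_\bk/\norm{Q_\bk}_\infty$ so that $\norm{\tilde Q_\bk}_\infty\le 1$ and apply QSP to obtain $\tilde Q_\bk(A_r)$ in $\le D-1$ instances of $U_{A_r}$. Composing these $r+1$ block-encodings by the product rule gives $V_\bk$, a block-encoding of $\prod_{i<r}T_{a_i}(A_i)\cdot\tilde Q_\bk(A_r)$ with subnormalization $1$ (each factor being bounded by $1$) and $(r+1)m+O(r)$ ancillae. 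Finally I would assemble $g(\bbA)=\sum_\bk \norm{Q_\bk}_\infty V_\bk$ with a linear-combination-of-unitaries, whose prepare register has $\lceil \log_2 D^r\rceil=rd$ qubits and whose subnormalization is $\sum_\bk \norm{Q_\bk}_\infty$.

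The resource counts then follow by summation. The total block-encoding calls are $\sum_\bk\bigl(\sum_{i<r}a_i+\deg Q_\bk\bigr)$; summing each $a_i$ over its $D$ values while the other $r-1$ indices range freely gives $O(rD^{r+1})$, and the $Q_\bk$ terms add $O(D^{r+1})$, for $O(rD^{r+1})$ overall. The ancillae are $(r+1)m$ (the product registers) plus $rd$ (the prepare register) plus $O(r)$ QSP overhead, i.e.\ $O(r(m+d))$. For correctness, the residual error is governed by $\norm{f-g}_{L^\infty}\in O(\eps')$ together with the $\eps$-errors of the input block-encodings, which propagate linearly through the product and the LCU; taking $\eps$ small enough keeps the total at $O(\eps')$, and applying the resulting block-encoding to the prepare oracle for $\ket{\bv}$ outputs a scaling of $f(\bbA)\ket{\bv}$ to precision $\eps'$.

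The hard part will be the subnormalization bound $\sum_\bk\norm{Q_\bk}_\infty\in O((D+2)^r)$, the only place the structure of $g$ genuinely enters. I would derive it from Chebyshev-coefficient estimates on the projections $Q_\bk$, where the per-coordinate contributions multiply across $\bk$; the delicate point is that the crude per-coefficient bound $|c_a|\le 2\norm{g}_\infty$ summed over the $D$ modes would only yield the weaker $(2D-1)^r$. Obtaining the sharper $(D+2)^r$ instead requires exploiting the $L^\infty$-boundedness of $g$ through a Parseval/Cauchy--Schwarz estimate (combined with a Nikolskii-type comparison of $\norm{Q_\bk}_\infty$ to $\norm{Q_\bk}_2$), so that the effective per-coordinate factor is $O(D)$ rather than $O(2D)$. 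This estimate is what I expect to demand the most care, and it is precisely what makes the asymmetric treatment of the last coordinate — kept as a single polynomial $Q_\bk$ rather than Chebyshev-expanded and placed under another LCU — pay off.
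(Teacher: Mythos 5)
Your construction is the paper's own, step for step: the same mixed Chebyshev decomposition (Chebyshev factors $T_{a_i}$ in the first $r$ coordinates, a projected residual $Q_\bk$ in the last), the same assembly via QSP, the product rule of \cite[Lemma 53]{GSLW}, and LCU, and the same resource counts of $O(rD^{r+1})$ instances and $O(r(m+d))$ ancillae (your $rd$-qubit prepare register is in fact accounted more carefully than the paper's tally). The suppression of the input subnormalization $\ga$ in your product step is also the same simplification the theorem statement itself makes, so up to the subnormalization bound the proposal is correct and essentially identical to the paper.

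The one place you diverge is the subnormalization bound, and there your instincts are better than you believe, while your proposed fix is misdirected. The bound you dismiss as ``crude,'' $(2D-1)^r$, is exactly what this decomposition yields: the projection formula carries a per-coordinate factor $2-\delta_{s_k,0}$ (equal to $2$ for a nonzero Chebyshev index and $1$ for a zero index), so $\gb_\bs\le\prod_{k}(2-\delta_{s_k,0})$ and $\norm{\bgb}_1\le\prod_{k}\sum_{s_k\in[[D]]}(2-\delta_{s_k,0})=(2D-1)^r$. The paper's own route to $(D+2)^r$ asserts $\norm{Q_\bs}_\infty\le2$ uniformly but then sums $2^{z(\bs)}$ with $z(\bs)$ the number of \emph{zero} entries of $\bs$; since zeros make the per-coordinate factor smaller, not larger, that exponent is inverted, and the honest output of the paper's argument is precisely your $(2D-1)^r$. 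This discrepancy is immaterial for the theorem as stated: $r$ is the number of input matrices, fixed while $D$ grows, so $(2D-1)^r\le 2^r(D+2)^r\in O((D+2)^r)$ and your simple bound already completes the proof. Consequently the Parseval/Cauchy--Schwarz/Nikolskii step you flag as ``the hard part'' is unnecessary, and it would not deliver $(D+2)^r$ in any case: Cauchy--Schwarz across the $D^r$ terms costs $D^{r/2}$ and a Nikolskii comparison in the last variable costs about $\sqrt{D}$, so that route produces a bound of shape $(2D)^{(r+1)/2}$ --- a genuinely different quantity (better than $D^r$ when $r\ge2$, worse when $r=1$), not a sharpening of the per-coordinate constant from $2D$ to $D$. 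Replace the speculative refinement by the observation that $(2D-1)^r=O((D+2)^r)$ for fixed $r$, and your proof is complete; on this point it is, if anything, more careful than the paper's.
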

As a consequence, we may specialize to the case of bivariate functions to obtain the following result about QET for normal matrices: 
\begin{theorem}
    Consider any normal matrix $M$. Given block-encoding $U_M\in\BE_{\ga,m}(M;\eps)$, target precision $\eps'>0$, and a bivariate polynomial $g$ approximating a given $L^\infty$-bounded function $f:\C\to\C$ as $\norm{f-g}_{L^\infty([-1,1]+i[-1,1])}\in O(\eps')$, there is an algorithm (\href{alg:mqet cap}{Algorithm 2}) for implementing a block-encoding of $f(M)$ using $O(D^2)$ instances of the unitary $U_M$, $O(m+d)$ additional ancillae, and $O\lpr{D}$ subnormalization factor. Applying this matrix to $\ket{\bv}$ solves the given QET instance with precision $\eps'$. 
\end{theorem}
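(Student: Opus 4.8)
The plan is to recognize this claim as the $r=1$ specialization of the multivariate result (Theorem~1). Every normal $M$ splits uniquely as $M = A + iB$ with $A = \frac{1}{2}(M + M^\dagger)$ and $B = \frac{1}{2i}(M - M^\dagger)$ both Hermitian, and the normality relation $MM^\dagger = M^\dagger M$ is exactly the statement that $A$ and $B$ commute; thus $\{A,B\}$ is a family of two pairwise commuting Hermitians to which Theorem~1 applies. Since $M$ has eigenvalues in $\bD$, writing each as $a+ib$ with $a^2+b^2\le 1$ forces the eigenvalues of $A$ and of $B$ into $[-1,1]$, so we sit inside the hypercube on which Theorem~1 operates. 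Identifying $f$ with the real-variable function $(x,y)\mapsto f(x+iy)$, the hypothesis becomes that the given bivariate polynomial $g$ obeys $\norm{f-g}_{L^\infty([-1,1]+i[-1,1])}\in O(\eps')$, i.e. $g(x,y)\approx f(x+iy)$ uniformly on the unit square; this is precisely the bivariate approximation Theorem~1 consumes, and on the mutual eigenbasis $g(A,B)$ then approximates $f(M)$.

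The one step not already supplied by Theorem~1 is manufacturing block-encodings of $A$ and $B$ separately from the single block-encoding $U_M\in\BE_{\ga,m}(M;\eps)$. First note that $U_M^\dagger\in\BE_{\ga,m}(M^\dagger;\eps)$, since daggering the defining block relation sends $M/\ga$ to $M^\dagger/\ga$. I would then apply a one-ancilla linear combination of unitaries (LCU) to the pair $U_M, U_M^\dagger$: weights $(1/2,1/2)$ yield a block-encoding of $\frac{1}{2}(M+M^\dagger)/\ga = A/\ga$, while weights $(-i/2,\,i/2)$ yield one of $\frac{1}{2i}(M-M^\dagger)/\ga = B/\ga$. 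In each case the $\ell^1$-weight equals $1$, so the subnormalization stays $\ga$ and only a single ancilla is added, giving $U_A, U_B \in \BE_{\ga,m+1}(\cdot\,;O(\eps))$ at the cost of $O(1)$ calls to $U_M$ apiece.

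With $U_A, U_B$ in hand I would invoke Theorem~1 with $r=1$, matrix family $\{A,B\}$, polynomial $g$, and precision $\eps'$, and specialize its resource counts. The $O(rD^{r+1})$ block-encoding instances become $O(D^2)$, each of which is $O(1)$ copies of $U_M$, for $O(D^2)$ instances of $U_M$ overall; the $O(r(m+d))$ ancillae become $O((m+1)+d)=O(m+d)$; and the $O((D+2)^r)$ subnormalization becomes $O(D)$. Applying the resulting block-encoding of $g(A,B)$ to $\ket{\bv}$ then solves the QET instance: each eigencomponent picks up $g(a_k,b_k)$ in place of $f(a_k+ib_k)$, so the error is controlled termwise by $\norm{f-g}_\infty$, which with the propagated block-encoding error keeps the output within $O(\eps')$ of $f(M)\ket{\bv}$.

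The main obstacle is bookkeeping rather than a new idea: one must check that the LCU construction does not inflate the subnormalization past $\ga$ (it does not, since the weights are $\ell^1$-normalized) and that the error $\eps$ of $U_M$ propagates in a controlled way through both the LCU and the algorithm of Theorem~1, so that the final precision stays $O(\eps')$ once $g$ satisfies $\norm{f-g}_\infty\in O(\eps')$. Everything else is a direct instantiation of the multivariate theorem at $r=1$.
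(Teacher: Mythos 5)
Your proposal is correct and follows essentially the same route as the paper: the paper likewise forms $A=\re M$ and $B=\im M$ via LCU on $(U_M,U_M^\dag)$ with weights $(\nicefrac12,\nicefrac12)$ and $(\nicefrac{-i}{2},\nicefrac{i}{2})$, obtaining $U_A,U_B\in\BE_{\ga,m+1}(\cdot;\eps)$, and then treats the normal case as exactly the bivariate ($r=1$) instance of the MQET algorithm, with the same Chebyshev-based decomposition, QSP, product, and final LCU yielding the $O(D^2)$ instance count, $O(m+d)$ ancillae, and $O(D)$ subnormalization. Your resource bookkeeping and error propagation match the paper's analysis.
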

A further application is to perform a sort of `Hamiltonian' simulation, where the `Hamiltonian' operator is actually just any normal matrix (not necessarily Hermitian). 

This framework fits into the recent development of quantum matrix functions, with hallmarks such as Grover's search algorithm \cite{Grover} and its generalization amplitude amplification, matrix inversion (Harrow--Hasidim--Lloyd \cite{HHL}), the linear combination of unitaries algorithm (Childs--Wiebe \cite{CW}, which we use as a subroutine), and quantum singular value transformation (Gily\'en--Su--Low--Wiebe \cite{GSLW}). Our results even more deeply rely on QSP of Low--Chuang \cite{LC}, and furnish approximate versions for both the specialized setting of normal operators as well as families of commuting operators, albeit with potentially steep costs and without so general a framework as offered by QSP and QSVT (see the {\em grand unification} of Martyn--Rossi--Tan--Chuang \cite{MRTC}). This is similar in spirit to the recent work of Rossi--Chuang \cite{RC} on {\em multivariable quantum signal processing} (M-QSP); many of the deliverables obtained in this paper also appear, possibly with different associated costs, in \cite{RC}. 

We briefly describe some differences between this work and \cite{RC}. M-QSP achieves extremely short block-encodings, using linearly-many instances of the block-encodings of the input matrices, but relying on a separate conjecture about the Fej\'er--Riesz theorem, on a relatively restricted class of functions, and requiring that the {\em block-encodings} commute---not just that the underlying, block-encoded matrices themselves commute. In dropping constraints on functions (and their polynomial approximations) and the structure of the block-encodings, we incur greater cost in block-encoding parameters and total instances used. 

Our application here to normal matrices recalls the aims of the recent work of Takahira--Ohashi--Sogabe--Usuda \cite{TOSU, TOSU 2} which uses quadrature and Cauchy's integral formula to implement matrix functions, though \cite{TOSU, TOSU 2}'s and \secref{sec:normal be}'s approaches are ultimately fundamentally incomparable. The idea in \cite{TOSU 2} is for any holomorphic function and any diagonalizable matrix to approximate the integral via Cauchy's integral formula using an appropriately-chosen contour containing the matrix's eigenvalues. In order to compare resulting block-encoding parameters, one must somehow contrast the parameters chosen for a given quadrature against a polynomial approximation. As the philosophies of the algorithms are simply incompatible, any comparison cannot hold, and indeed there are instances where the quadrature approach is advantageous (such as when the eigenvalues in $\bD$ are well-understood and lie in a convenient contour's interior), while there are also instances where the polynomial decomposition approach is advantageous (such as when the function is nearly an indicator function in $\bD$, which admits poor $L^\infty$ holomorphic approximation). 

Lastly, the work of Guo--Mitarai--Fujii \cite{GMF} concerns a quantum operation distinct from matrix functions, instead known as nonlinear transformation of complex amplitudes (NTCA), in which the individual amplitudes of a state (with respect to a canonical basis) are transformed by a polynomial. Our approach gives rise to a solution to this problem for a less constrained class of functions than those considered in \cite{GMF}. 

The paper is organized as follows. In \secref{sec:prelim} we describe present methods for implementing QET. This includes the framework of block-encoding and QSP for Hermitian matrices, as well as some newer literature on QET algorithms for other families of matrices. In \secref{sec:normal be} we describe the proposed method for normal matrices, revealed to be a special case of the algorithm for MQET (\secref{sec:general comm herm}). \secref{sec:apps} details an application of the algorithm for normal matrices to NTCA. 

\section{Preliminaries}\label{sec:prelim}

\subsection{Mathematics notation}

We will always consider operators on $n$ qubits (possibly with ancillae), and will put $N:=2^n$. In general, upper-case letters used for integers will be powers of 2, whose logarithms are given in lower-case. All logarithms are base 2. By $\U(N)$ we refer to the unitary linear operators on $\C^N$, by $\PU(N)$ we refer to the quotient of $\U(N)$ by scalars, and by $\Mat(N)$ we refer to all linear operators on $\C^N$, that is, $\C^{N\times N}$. $\bD$ is the {\em closed} unit disk in $\C$. For $d\in\N$, let $[[d]]:=\{0,\dots,d-1\}$. Let $\norm{\cdot}$ be the operator norm. Let $\norm{\cdot}_p$ be the $L^p$-norm on $[-1,1]$, $p\in[1,\infty]$. We use the shorthand ``$f:S\to\C$ is $L^\infty$-bounded'' (for any domain $S$) to say that $\norm{f}_{L^\infty S}\in O(1)$, independently of other ambient parameters (such as the number of qubits in the relevant circuit). Finally, $T_k$ is the $k$th Chebyshev polynomial.

\subsection{Circuit notation}\label{sec:circuit notation}

When we provide examples of quantum circuits, we will predominantly use standard notation, and will introduce two new notations intended for visual ease. 

The first is a compact notation for controls on ancilla registers. We say (consistent with the standard notation) that a single-wire control as in \figref{fig:controls} is {\em 0-controlled} on the left and {\em 1-controlled} on the right. 
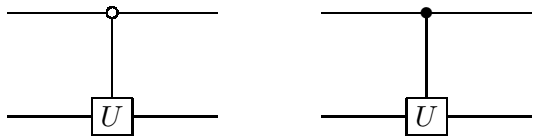
\begin{figure}
    \centering
    \leavevmode
    \Qcircuit @! {
    & \ctrlo{1} \qw & \qw & & \ctrl{1} \qw & \qw \\
    & \gate{U} & \qw & & \gate{U} & \qw
    }
    \caption{0- and 1-controlled $U$-gates.}
    \label{fig:controls}
\end{figure}
Suppose we have a $m$-ancilla register $A$, indexed from the top as $0,1,\dots,m-1$, and that $U$ acts on the register $B$ with a 0- or 1-control from each wire of $A$. We will compactly identify this with a number written in binary: the $j$-indexed wire is the $j$th place value, and it is 0 if the corresponding control is 0-controlled, and 1 if 1-controlled. If the resulting binary value is $b$, then this will be indicated by grouping all $m$ wires and putting $\ket{b}$ in a rounded control box. For instance, we have the equivalent circuits depicted in \figref{fig:controls ex}. 
\begin{figure}
    \centering
    \leavevmode
    \Qcircuit @C=1em @R=0.7em {
    \lstick{} & \ctrl{4} \qw & \ctrl{4} \qw & \qw \\
    \lstick{} & \ctrl{3} \qw & \ctrlo{3} \qw & \qw \\
    \lstick{} & \ctrl{2} \qw & \ctrl{2} \qw & \qw \\
    \lstick{} & \ctrl{1} \qw & \ctrl{1} \qw & \qw 
      \inputgroupv{1}{4}{0.8em}{1.4em}{A} \\
    \lstick{B} & \gate{U} & \gate{W} & \qw 
    }
    \qquad\qquad
    \Qcircuit @C=1.5em @R=3em {
    \lstick{A} & {/^4} \qw & \measure{\ket{15}} \qwx[1] \qw & \measure{\ket{13}} \qwx[1] \qw & \qw \\
    \lstick{B} & \qw & \gate{U} & \gate{W} & \qw
    }
    
    \caption{An example of the new notation for multiply-controlled gates.}
    \label{fig:controls ex}
\end{figure}
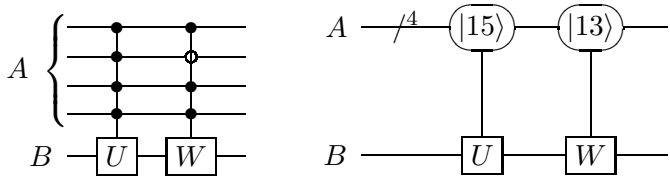

We also introduce notation for multi-wire gates that do not act on some wires that they are ``on top of.'' For instance, if there are three registers $A$, $B$, and $C$, which for broader clarity must be provided in that order, but a specific gate $U$ only acts on $A$ and $C$, then we will depict a ``break'' in the wire(s) for $B$. See for an example \figref{fig:skip registers}, which also depicts a formulation via SWAP gates which circumvents this notation, assuming that (say) $A$ and $B$ comprise the same number of qubits. Another example is \figref{fig:mult}. 
\begin{figure}
    \centering
    \leavevmode
    \Qcircuit @C=0.5cm @R=1em {
    \lstick{A} & \qw & \multigate{2}{U} & \qw & \qw \\
    \lstick{B} & \qw & \nghost{U} & & \qw \\
    \lstick{C} & \qw & \ghost{U} & \qw & \qw 
    }
    \qquad\qquad
    \Qcircuit @C=0.5cm @R=1em {
    \lstick{A} & \qswap \qwx[1] & \qw & \qswap \qwx[1] & \qw\\
    \lstick{B} & \qswap & \multigate{1}{U} & \qswap & \qw \\
    \lstick{C} & \qw & \ghost{U} & \qw & \qw 
    }
    \caption{An example of the new notation for gates that don't incorporate a specific register.}
    \label{fig:skip registers}
\end{figure}
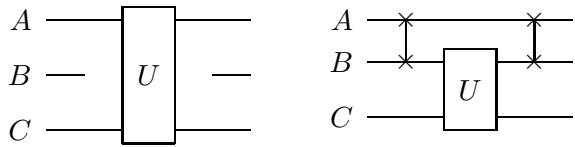

\subsection{Block-encodings} 

Computations with general matrices are most easily performed in the block-encoding model, which we summarize here. 

Intuitively, a block-encoding of a not necessarily unitary matrix $A$ is a large unitary matrix 
$$U_A=\begin{pmatrix}\frac{1}{\alpha}A&\star\\\star&\star\end{pmatrix}$$ 
where the $\star$'s represent arbitrary matrices (not necessarily all square). More specifically, given a $n$-qubit matrix $A$, we say that a ($n+a$)-qubit matrix $U_A$ is an {\em $(\alpha, a,\eps)$-block-encoding} of $A$ if 
$$\|A-\alpha(\bra{0^a}\otimes I_n)U_a(\ket{0^a}\otimes I_n)\|<\eps.
$$ 
This is also written as $U_A\in\BE_{\alpha, a}(A;\eps)$. $\ga$ is known as the {\em subnormalization factor} and is related to the number of repetitions needed to obtain the correct state after applying a block-encoding (since one wishes to measure the first register to be $\ket{0^a}$). 

To highlight its applicability, we note that if each matrix entry may be efficeintly computed then so too can a block-encoding, by the universality of the Toffoli gate in classical circuitry and the Toffoli+$H$ gates in quantum circuitry \cite{Shi,Aharonov}. Further, $(S,n+1,0)$-block-encodings are implementable using specialized oracles which identify the locations of nonzero entries in $A$, in the event that $A$ has at most $S$ nonzero entries in each row and in each column, and in the case that (e.g.) $A$ has exactly $S$ bands the same oracles give $(S,\log S+1,0)$-block-encodings \cite[\S6.3]{Lin}. However, for our purposes, we will fully blackbox the implementation details of our given block-encodings, as matrix-specific considerations quite often inform the choice of block-encoding model. 

Various functions are relatively easy (but not free) to compute with block-encoded matrices. We highlight two, which we will use later on. 

\begin{prop}[{block-encoding of product \cite[Lemma 53]{GSLW}}]\label{prop:be prod}
Given $n$-qubit matrices $A,B$ and $U_A\in\BE_{\alpha, a}(A;\eps_A)$, $U_B\in\BE_{\beta, b}(B;\eps_B)$, then $(I_b\otimes U_A)(I_a\otimes U_B)\in\BE_{\alpha\beta,a+b}(AB;\alpha\eps_B+\beta\eps_A)$, where $I_a, I_b$ act on the ancilla registers of $A,B$ respectively.\footnote{That is, suppose the work register is labeled 3, and we have an $a$-qubit register 1 and a $b$-qubit register 2. Then we arrange $U_A$ to be acting on 1 and 3, and $U_B$ to be acting on 2 and 3.}
\end{prop}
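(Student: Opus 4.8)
The plan is to reduce the claim to two ingredients: an exact identification of the top-left block of the composed circuit, and a routine error-propagation estimate. Throughout, I would write $\hat A := (\bra{0^a}\otimes I_n)U_A(\ket{0^a}\otimes I_n)$ and $\hat B := (\bra{0^b}\otimes I_n)U_B(\ket{0^b}\otimes I_n)$ for the compressions, so that the block-encoding hypotheses read $\norm{A-\alpha\hat A}<\eps_A$ and $\norm{B-\beta\hat B}<\eps_B$, and I would set $\tilde A:=\alpha\hat A$, $\tilde B:=\beta\hat B$. Since $\hat A,\hat B$ are compressions of unitaries, $\norm{\hat A},\norm{\hat B}\le 1$, whence $\norm{\tilde A}\le\alpha$ and $\norm{\tilde B}\le\beta$; this is the quantitative content of the word ``subnormalization.''

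First I would establish the key algebraic identity
$$\alpha\beta\,(\bra{0^{a+b}}\otimes I_n)(I_b\otimes U_A)(I_a\otimes U_B)(\ket{0^{a+b}}\otimes I_n)=\tilde A\,\tilde B,$$
that is, the composed circuit's top-left block is exactly $\hat A\hat B$. Following the register convention in the footnote (register $1$ of $a$ qubits, register $2$ of $b$ qubits, work register $3$ of $n$ qubits, with $U_A$ on $1,3$ and $U_B$ on $2,3$), I would expand each unitary over its ancilla basis, $U_A=\sum_{x,x'}(\ket x\bra{x'})_1\otimes[U_A]_{x,x'}$ and $U_B=\sum_{y,y'}(\ket y\bra{y'})_2\otimes[U_B]_{y,y'}$, where the blocks $[U_A]_{x,x'},[U_B]_{y,y'}$ act on register $3$. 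Because $I_b\otimes U_A$ is the identity on register $2$ and $I_a\otimes U_B$ is the identity on register $1$, the product is $\sum_{x,x',y,y'}(\ket x\bra{x'})_1\otimes(\ket y\bra{y'})_2\otimes([U_A]_{x,x'}[U_B]_{y,y'})_3$, and sandwiching against $\ket{0^a}_1\ket{0^b}_2$ annihilates every term except $x=x'=y=y'=0$, leaving $[U_A]_{0,0}[U_B]_{0,0}=\hat A\hat B$. Multiplying by $\alpha\beta$ gives the identity.

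With this identity in hand, the required quantity is the operator-norm distance $\norm{AB-\tilde A\tilde B}$. I would telescope through the intermediate $\tilde A B$,
$$\norm{AB-\tilde A\tilde B}\le\norm{(A-\tilde A)B}+\norm{\tilde A(B-\tilde B)}\le\norm{A-\tilde A}\,\norm{B}+\norm{\tilde A}\,\norm{B-\tilde B},$$
and then invoke $\norm{\tilde A}\le\alpha$ together with $\norm B\le\beta$ to conclude $\norm{AB-\tilde A\tilde B}<\beta\eps_A+\alpha\eps_B$, exactly the asserted error. (If one prefers not to assume $\norm B\le\beta$ outright, it holds up to a harmless second-order term via $\norm B\le\norm{\tilde B}+\eps_B\le\beta+\eps_B$, the standard subnormalization convention making this exact.)

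The only genuinely delicate step is the block identification, and its difficulty is purely bookkeeping: one must track that the two middle unitaries act on disjoint ancilla registers while sharing the work register, so that the two ancilla projections factor independently and each passes transparently through the opposite unitary's identity factor. The ``skip register'' notation of \figref{fig:skip registers} is precisely what makes this arrangement legible. Everything after the identity is the standard sub-multiplicativity telescoping.
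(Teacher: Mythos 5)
Your proposal is correct, but there is no in-paper proof to compare it against: the paper states this proposition as an import from \cite[Lemma 53]{GSLW}, offering only the circuit depiction in \figref{fig:mult}, so the relevant comparison is with the standard proof in \cite{GSLW}. Your argument is essentially that proof. The corner-block identification (expanding each unitary over its own ancilla basis and noting that the two expansions commute past each other's identity factors, so that projecting onto $\ket{0^a}\otimes\ket{0^b}$ leaves exactly $[U_A]_{0,0}[U_B]_{0,0}=\hat A\hat B$) is the same bookkeeping that the figure encodes, and your telescoping is the same submultiplicativity estimate. The only difference is cosmetic: \cite{GSLW} telescopes through $A(\beta\hat B)$ and invokes $\norm{A}\le\alpha$, whereas you telescope through $(\alpha\hat A)B$ and invoke $\norm{B}\le\beta$; in either arrangement exactly one of the two needed norm bounds is automatic from unitarity of the encodings ($\norm{\hat A},\norm{\hat B}\le1$) while the other is a convention on the input matrix. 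You are right to flag that without that convention one only gets $\beta\eps_A+\alpha\eps_B+\eps_A\eps_B$, i.e.\ the stated error holds up to a second-order term; that caveat is implicitly present in \cite{GSLW} as well, and adopting the usual convention $\norm{B}\le\beta$ (or $\norm{A}\le\alpha$) makes your bound exactly the one asserted.
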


See also \figref{fig:mult} for a depiction of this circuit. 
\begin{figure}
    \centering
    \leavevmode
\Qcircuit @R=2em @C=0.7em {
\lstick{1} & \qw & {/^a} \qw & \qw & \qw & \qw & \qw & \multigate{2}{U_A} & \qw & & \nghost{U_B} & & & \qw & \qw \\
\lstick{2} & \qw & {/^b} \qw & \qw & \qw & \qw & & \nghost{U_A} & & & \ghost{U_B} & \qw & \qw & \qw & \qw \\
\lstick{3} & \qw & {/^n} \qw & \qw & \qw & \qw & \qw & \ghost{U_A} & \qw & \qw & \multigate{-2}{U_B} & \qw & \qw & \qw & \qw
    }
    \caption{\cite{GSLW}'s multiplication circuit.}
    \label{fig:mult}
\end{figure}
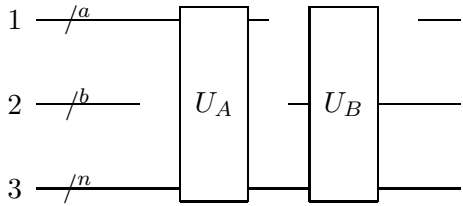

\begin{prop}[{block-encoding linear combinations of unitaries (LCU) \cite{CW} (see also \cite[\S7.3]{Lin})}]\label{prop:lcu}
Given access to a {\em prepare oracle} $V$ satisfying 
$$V\ket{0^n}=\frac{1}{\|\bba\|_1}\sum_{0\le j<K}\sqrt{a_j}\ket{j}$$
for $K=2^k$, and $U_0, \dots, U_{K-1}\in\U(N)$ arranged in a {\em select oracle} 
$$U:=\sum\limits_{0\le j<K}\ket{j}\bra{j}\otimes U_j,$$ 
then we have that
$$(V^\dag\otimes I_n)U(V\otimes I_n)\in\BE_{\norm{\bba}_1,k}\lpr{\sum\limits_{0\le j<K}a_jU_j;0}.$$
If $U$ and $V$ are implemented so that actually $(V^\dag\otimes I_n)U(V\otimes I_n)\in\BE_{\norm{\bba}_1,k}\lpr{\sum\limits_{0\le j<K}a_jU_j;\eps}$ then we will write $\LCU_\eps\lpr{\lpr{U_i}_{0\le i<K},\bba}$ for the circuit. 
\end{prop}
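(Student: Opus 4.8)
The plan is to verify the block-encoding claim by a direct computation in the standard prepare--select--prepare$^\dag$ form, and then read off the approximate ($\eps>0$) version as an immediate corollary. Write the select register on $k$ qubits and the work register on $n$ qubits. The first thing I would pin down is the normalization: for $V$ to be a genuine (unitary) prepare oracle, its output $V\ket{0^k}$ must be a unit vector. Taking the $a_j$ to be nonnegative --- as the use of $\sqrt{a_j}$ presumes, a general phase being harmlessly absorbed into the corresponding $U_j$ --- we have $\sum_j\lpr{\sqrt{a_j}}^2=\sum_j a_j=\norm{\bba}_1$, so the normalizing constant must be $\norm{\bba}_1^{-1/2}$ and the prepare oracle is $V\ket{0^k}=\norm{\bba}_1^{-1/2}\sum_{0\le j<K}\sqrt{a_j}\ket{j}$.

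Next I would compute the action of $(V^\dag\otimes I_n)U(V\otimes I_n)$ on an arbitrary input $\ket{0^k}\otimes\ket{\psi}$ stage by stage. Applying $V\otimes I_n$ spreads the select register into $\norm{\bba}_1^{-1/2}\sum_j\sqrt{a_j}\ket{j}\otimes\ket{\psi}$; applying the select oracle $U=\sum_j\ket{j}\bra{j}\otimes U_j$ then attaches $U_j$ to the work register in each branch, giving $\norm{\bba}_1^{-1/2}\sum_j\sqrt{a_j}\ket{j}\otimes U_j\ket{\psi}$. The step demanding care is the extraction of the block-encoded operator, i.e. sandwiching with $\bra{0^k}$ on the select register via $(\bra{0^k}\otimes I_n)(V^\dag\otimes I_n)$. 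Using $\bra{0^k}V^\dag=\lpr{V\ket{0^k}}^\dag=\norm{\bba}_1^{-1/2}\sum_{j'}\sqrt{a_{j'}}\bra{j'}$ together with the orthonormality $\langle j'|j\rangle=\delta_{j'j}$, the resulting double sum collapses to its diagonal and the two factors $\sqrt{a_{j'}}\sqrt{a_j}$ merge into $a_j$, yielding $\norm{\bba}_1^{-1}\sum_j a_j\,U_j\ket{\psi}$.

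Since $\ket{\psi}$ was arbitrary, this shows $(\bra{0^k}\otimes I_n)(V^\dag\otimes I_n)U(V\otimes I_n)(\ket{0^k}\otimes I_n)=\norm{\bba}_1^{-1}\sum_j a_jU_j$, which is exactly the defining identity (with $\alpha=\norm{\bba}_1$, error $0$) for $\BE_{\norm{\bba}_1,k}\lpr{\sum_j a_jU_j;0}$. This proposition has no serious obstacle; its two delicate points are purely the normalization bookkeeping that makes $V$ unitary while surfacing the subnormalization factor $\norm{\bba}_1$, and the observation that the final projection collapses the double sum to $\sum_j a_jU_j$ rather than to a sum of $\sqrt{a_j}U_j$. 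Finally, the approximate statement is definitional: if the realized circuits implement the prepare and select oracles only up to error, the same sandwich differs from $\norm{\bba}_1^{-1}\sum_j a_jU_j$ by at most that error, giving a $\BE_{\norm{\bba}_1,k}\lpr{\sum_j a_jU_j;\eps}$, which we name $\LCU_\eps\lpr{\lpr{U_i}_{0\le i<K},\bba}$.
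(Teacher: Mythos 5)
Your proposal is correct, and it is the standard prepare--select--unprepare computation: the paper itself gives no proof of this proposition, stating it as a cited preliminary from \cite{CW} and \cite[\S7.3]{Lin}, and your argument is essentially the proof found in those references. You also rightly flag (and silently repair) two typos in the statement as printed: the prepare register should be $\ket{0^k}$ rather than $\ket{0^n}$, and the normalization must be $\norm{\bba}_1^{-1/2}$, not $\norm{\bba}_1^{-1}$, for $V\ket{0^k}$ to be a unit vector --- with that correction the sandwich $(\bra{0^k}\otimes I_n)(V^\dag\otimes I_n)U(V\otimes I_n)(\ket{0^k}\otimes I_n)=\norm{\bba}_1^{-1}\sum_j a_jU_j$ comes out exactly as the block-encoding definition requires.
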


Using quantum signal processing (QSP) (for Hermitian matrices) or quantum singular value transform many other matrix functions can be implemented using polynomial approximations and block-encodings, cf.\ \cite[Chapters 7--8]{Lin}. See in particular the algorithm named {\em QET for Hermitian matrices} \cite[\S7.5]{Lin}. 

\section{QET for normal matrices} \label{sec:normal be}
    
    Recall the notion of block-encoding non-unitary matrices into larger, unitary matrices which may then be implemented as quantum circuit primitives. If $U_A\in\BE_{\ga,m}(A;\eps)$ is such a block-encoding and $f:\C^{N\times N}\to\C^{N\times N}$ is a matrix function, one seeks to implement a circuit for $U_{f(A)}\in\BE_{\ga',m'}(f(A);\eps')$ using only the standard quantum primitives as well as $U_A$ (and $U_A^\dag$), with as few calls to $U_A$ and $U_A^\dag$ as possible, $m'$ as small as possible (minimal number of ancillae), and $\ga'$ as small as possible (maximal success probability). For $A$ Hermitian, this problem is discussed extensively in \cite[Chapter 7]{Lin}. This analysis crucually depends on the $A$-block being Hermitian in order to construct an alternating sequence of $\C$-spaces, leaving the question of block-encoding matrix functions for {\em normal} matrices unresolved by these techniques. In this section we put forth an algorithm to implement quantum circuits for a given matrix function $f$ of a given normal matrix $M$. We assume that it is feasible to produce an arbitrarily good approximation to $f$ by a complex polynomial $g(z)=G(x,y)$ where $z=x+iy$ (see \secref{sec:poly approx}), into which we may plug in $z=M$.\footnote{The technical formulation of this condition is that $f$ on $\bD$ satisfies \propertyref{property:extensibility}, $(\g,k)$-extensibility for (say) $\g,k\in O(1)$.} Classically computing phase factors for QSP is a feasible task, which has been studied in Haah \cite{Haah} and Dong--Meng--Whaley--Lin \cite{DMWL}. However, at no point do we assume that computing $M$'s eigenvalues is tractable. 

    For clarity, we recall that a normal matrix is one which commutes with its Hermitian adjoint, and normal matrices are precisely those which are unitarily diagonalizable. 
    
    We defer a treatment of the most general case to \secref{sec:general comm herm}. 
    
    \subsection{Algorithm}\label{sec:normal alg}

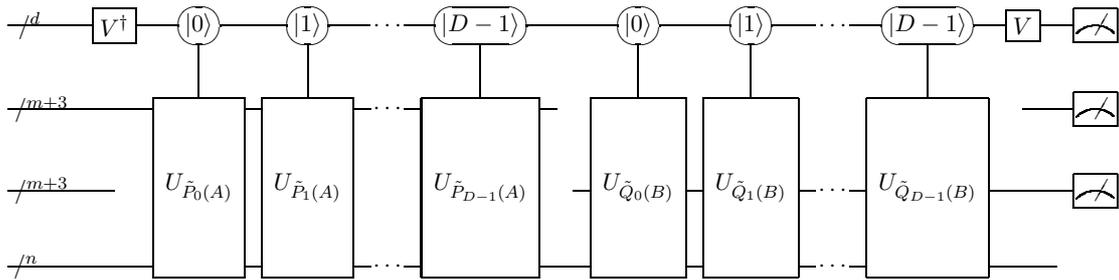
\begin{figure}
\centering
\leavevmode
\resizebox{\textwidth}{!}{
\Qcircuit @R=2em @C=0.7em {
& \qw & {/^{d\phantom{+3}}} \qw & \qw & \qw & \gate{V^\dag} & \measure{\ket{0}} \qwx[1] \qw & \measure{\ket{1}} \qwx[1] \qw & \qw & \cdots & & \measure{\ket{D-1}} \qwx[1] \qw & \qw & \qw & \measure{\ket{0}} \qwx[1] \qw & \measure{\ket{1}} \qwx[1] \qw & \qw & \cdots & & \measure{\ket{D-1}} \qwx[1] \qw & \gate{V} & \qw & \meter \\
& \qw & {/^{m+3}} \qw & \qw & \qw & \qw & \multigate{2}{U_{\tilde{P}_0(A)}} & \multigate{2}{U_{\tilde{P}_1(A)}} & \qw & \cdots & & \multigate{2}{U_{\tilde{P}_{D-1}(A)}} & \qw & & \nghost{U_{\tilde{Q}_0(B)}} & \nghost{U_{\tilde{Q}_1(B)}} & & & & \nghost{U_{\tilde{Q}_{D-1}(B)}} & & \qw & \meter \\
& \qw & {/^{m+3}} \qw & \qw & \qw & \qw & \nghost{U_{\tilde{P}_0(A)}} & \nghost{U_{\tilde{P}_1(A)}} & & & & \nghost{U_{\tilde{P}_{D-1}(A)}} & & & \ghost{U_{\tilde{Q}_0(B)}} & \ghost{U_{\tilde{Q}_1(B)}} & \qw & \cdots & & \ghost{U_{\tilde{Q}_{D-1}(B)}} & \qw & \qw & \meter \\
& \qw & {/^n{\phantom{+3}}} \qw & \qw & \qw & \qw & \ghost{U_{\tilde{P}_0(A)}} & \ghost{U_{\tilde{P}_1(A)}} & \qw & \cdots & & \ghost{U_{\tilde{P}_{D-1}(A)}} & \qw & \qw & \multigate{-2}{U_{\tilde{Q}_0(B)}} & \multigate{-2}{U_{\tilde{Q}_1(B)}} & \qw & \cdots & & \multigate{-2}{U_{\tilde{Q}_{D-1}(B)}} & \qw & \qw
}
}
\caption{Circuit for block-encoding of $f(M)$. See \secref{sec:mqet alg} for the algorithm, and \hyperref[alg:cap]{Algorithm 1} for a pseudocode description. $V$ comes from LCU (\propref{prop:lcu}), and the remaining notation on the top wire is introduced in \secref{sec:circuit notation}. We may let $P_k$ be $T_k$, as discussed in \secref{sec:subnorm}. }
\label{fig:normal circuit}
\end{figure}
        
        We describe an algorithm for obtaining a block-encoding of $f(M)$ given a block-encoding $U_M\in\BE_{\ga,m}(A;\eps)$ of a normal matrix $M$ and a function $f:\bD\to\bD$. We let $f$ be $(\g,k)$-extensible (\propertyref{property:extensibility}) for $\g,k\in O(1)$. This is a technical condition, defined in \secref{sec:poly approx}, which just ensures that $f$ is well-approximable. Let us assume that $f$ is well-approximated on $\bD$ by a polynomial $g$.
        
        Consider the decomposition
        \begin{equation}
            M=A+iB \label{eq:split}
        \end{equation}
        where $A$ and $B$ are commuting Hermitians with the same dimensions as $M$. It is not hard to obtain block-encodings of $A$ and $B$. In fact, all three are simultaneously diagonalizable, where if $M=U\cD U^\dag$ then $A=U(\re\cD)U^\dag$ and $B=U(\im\cD)U^\dag$. (It is this observation that powers \secref{sec:hamsim}.) In light of this, we immediately see how to implement block-encodings of $A$ and $B$: 
        \begin{align}
            U_A&=\LCU_\eps\lpr{\lpr{U_M,U_M^\dag},\lpr{\nhalf,\nhalf}}, & U_A&\in\BE_{\ga,m+1}(A;\eps) \nonumber \\
            U_B&=\LCU_\eps\lpr{\lpr{U_M,U_M^\dag},\lpr{\nicefrac{-i}{2},\nicefrac{i}{2}}}, & U_B&\in\BE_{\ga,m+1}(B;\eps). \nonumber 
        \end{align}
        As suggested in \secref{intro}, we can expand $g(x+iy)=G(x,y)$ in terms of polynomials in $x$ and $y$ so that 
        \begin{equation}
            g(M)=G(A,B)=\sum\limits_{0\le k<D}P_k(A)Q_k(B)\label{eq:vanilla sum-prod}
        \end{equation}
        for some $P_k,Q_k\in\C[z]$ of degree less than $D$ (where $g$ is also of degree less than $D$; suppose for convenience that $D$ is a power of 2). In the following, we describe how to use this decomposition, along with QSP, to obtain a block-encoding of $g(M)$, and how the parameters of the block-encoding depend on the decomposition we use. The implementation is also summarized in \figref{fig:normal circuit}. We will find that the subnormalization factor of the block-encoding of $g(M)$ is proportional to $\norm{\bgb}_1$, where $\bgb\in\R_{\ge0}^D$ is defined by $\gb_k :=\norm{P_k}_\infty\norm{Q_k}_\infty$. Our method for optimizing $\norm{\bgb}_1$ is to let $P_k=T_k$ and 
        $$Q_k=\frac{2^{1-\delta_{k,0}}}{\pi}\int_{-1}^1g(x+iy)T_k(x)\frac{\d x}{\sqrt{1-x^2}}.$$
        In \propref{prop:good scaling}, we show that this guarantees $\norm{\bgb}_1\le2D$. Our entire algorithm is summarized in \hyperref[alg:cap]{Algorithm 1}.
        
        Elaborating on how to use the decomposition in \eqref{eq:vanilla sum-prod} to implement a block-encoding of $g(M)$, first, normalize to $\tilde{P}_k=\frac{P_k}{\norm{P_k}_\infty}$ and $\tilde{Q}_k=\frac{Q_k}{\norm{Q_k}_\infty}$. \figref{fig:normal circuit} describes how to obtain $g(M)$ as quantum circuitry, we (classically) compute the phase factors $\Phi_k,\Psi_k$ for the polynomials $x\mapsto \tilde{P}_k(x),\tilde{Q}_k(x)$ (respectively), e.g.\ via \cite{DMWL}.\footnote{For clarity, we assume that one's implementation of phase factor computation is exact. However, error here is relatively insignificant: if each phase factor of $\Phi_k$ is off by $\eps'$ additively then we may crudely bound that the block-encoding of $\tilde{P}_k(A)$ is off by $D\eps'$, but since this computation is done classically it is sensible to assert that its error vanishes relative to the error on the quantum computer.} Then, we use QSP to compute $U_{\tilde{P}_k(A)}\in\BE_{\ga,m+2}(\tilde{P}_k(A);\eps)$ and $U_{\tilde{Q}_k(B)}\in\BE_{\ga,m+2}(\tilde{Q}_k(B);\eps)$, and compute via \cite[Lemma 53]{GSLW} the block-encoding of the product $U_{\tilde{P}_k(A)\tilde{Q}_k(B)}\in\BE_{\ga^2,2m+4}(\tilde{P}_k(A)\tilde{Q}_k(B);2\ga\eps)$. Finally, 
        \begin{align*}
            U_{g(M)}=\LCU_{2\ga\eps}\lpr{\lpr{\tilde{P}_k(A)\tilde{Q}_k(B)}_{0\le k<D},\bgb}, && U_{g(M)}\in\BE_{\ga^2\norm{\bgb}_1,2m+4+d}(g(M);2\ga\eps).
        \end{align*}
        
\begin{algorithm}
        \caption{block-encoding of $f(M)$}\label{alg:cap}
        \begin{algorithmic}[1]
        \Require $U_M\in\BE_{\ga,m}(M;\eps)$ for $M\in\Mat(N)$ normal; $g\in\C\lbr{z,\ol{z}}$ of degree less than $D$ such that $\norm{f-g}_\infty$ is small for $f:\bD\to\bD$
        \Ensure $U_{f(M)} \in \BE_{\ga',m'}(f(M);\eps')$
        \State $U_A\gets\LCU\lpr{\lpr{U_M,U_M^\dag},\lpr{\nhalf,\nhalf}}$ \Comment{$U_A\in\BE_{\ga,m+1}(A;\eps)$}
        \State $U_B\gets\LCU\lpr{\lpr{U_M,U_M^\dag},\lpr{\nicefrac{-i}{2},\nicefrac{i}{2}}}$ \Comment{$U_B\in\BE_{\ga,m+1}(B;\eps)$}
        \For{$0\le k<D$}
            \State
           $P_k \gets T_k$ \Comment{A Chebyshev polynomial}
          \State $Q_k(y)\gets\frac{2^{1-\delta_{k,0}}}{\pi}\int_{-1}^1g(x+iy)T_k(x)\frac{\d x}{\sqrt{1-x^2}}$
            \Comment{These satisfy $g(x+iy)=\sum\limits_{0\le k<D}T_k(x)Q_k(y)$}
            \State $\beta_k \gets \norm{P_k}_\infty \norm{Q_k}_\infty$
            \State
          $(\tilde{P}_k,\tilde{Q}_k) \gets (P_k/\norm{P_k}_\infty, Q_k/\norm{Q_k}_\infty) $
            \State Compute $U_{\tilde{P}_k(A)}\in\BE_{\ga,m+2}(\tilde{P}_k(A);\eps)$ \Comment{Using QSP}
            \State Compute $U_{\tilde{Q}_k(B)}\in\BE_{\ga,m+2}(\tilde{Q}_k(B);\eps)$ \Comment{Using QSP}
            \State Compute $S_k \in \BE_{\ga^2,2m+4}(\tilde{P}_k \tilde{Q}_k;2\ga\eps)$  
        \EndFor 
        \State \Return $\LCU\lpr{\lpr{S_k}_{0\le k<D},\bgb}$
        \end{algorithmic}
        \end{algorithm}

    \subsection{Analysis of the algorithm}
    
        In the previous section, we kept a running tally of the ancilla count; the final count was $2m+4+d$. We also track precision---$2\ga\eps$---and subnormalization factor---$\ga^2\norm{\bgb}_1$---above. 
        
        We can track the number of applications of $U_M$ (and its adjoint) as follows. $U_A$, $U_A^\dag$, $U_B$, and $U_B^\dag$ each requires two such gates; applying QSP for a degree less-than-$D$ polynomial multiplies the number of instances by at most $2D$. Computing products adds this cost. Running LCU adds these costs; the final LCU call is on $D$ unitaries. Thus there are at most $4(D-1)D\in O(D^2)$. Within the framework of decomposing a function by a polynomial of degree less than $D$ as in \eqref{eq:vanilla sum-prod} for which each monomial of the form $x^ky^{D-1-k}$ has nonzero coefficient (as is generically the case), the decomposition described using integration against the Chebyshevs uses as few $U_M$s as possible; see \secref{sec:qetn circuit depth}. 

        Observe that all nontrivial costs are incurred precisely when we deal with the many polynomials arising from the decomposition \eqref{eq:split}, by computing their paired products and then add using LCU (all of these inducing possibly large multiples of the base cost for each polynomial coming from QSP). 
    
    \subsection{Subnormalization factor}\label{sec:subnorm}
    
        \begin{proposition}\label{prop:good scaling}
            If $G(x,y)$ is degree less than $D$ in both $x$ and $y$ and satisfies $\norm{G}_{L^\infty[-1,1]^2}\le1$, then there are efficiently-computable degree-less-than-$D$ polynomials $P_k$ and $Q_k$ for $0\le k<D$ such that $G(x,y)=\sum\limits_{0\le k<D} P_k(x)Q_k(y)$ with $\norm{P_k}_{\infty} \le 1$ and $\norm{Q_k}_{\infty} \le 2$.
        \end{proposition}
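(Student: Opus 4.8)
The plan is to recognize the proposed $Q_k$ as the Chebyshev coefficients of $G$ in the $x$-variable, exploit the \emph{exactness} of the finite Chebyshev expansion (rather than mere approximation), and then bound each factor directly. First I would fix $y\in[-1,1]$ and regard $G(\cdot,y)$ as a polynomial in $x$ of degree less than $D$. Since the Chebyshev polynomials $\{T_0,\dots,T_{D-1}\}$ form a basis for the space of polynomials of degree less than $D$, the function $G(\cdot,y)$ admits an exact finite Chebyshev expansion $G(x,y)=\sum_{0\le k<D}c_k(y)T_k(x)$. Using the orthogonality relation $\int_{-1}^1 T_jT_k\,\frac{\d x}{\sqrt{1-x^2}}=\frac{\pi}{2^{1-\delta_{k,0}}}\delta_{jk}$, the coefficients are recovered as $c_k(y)=\frac{2^{1-\delta_{k,0}}}{\pi}\int_{-1}^1 G(x,y)T_k(x)\frac{\d x}{\sqrt{1-x^2}}$, which is precisely the proposed $Q_k$. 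Setting $P_k=T_k$ then turns the claimed decomposition into an identity. I would also note that since $G$ has degree less than $D$ in $y$ as well, integrating against $x$ leaves the $y$-degree untouched, so each $Q_k$ is a polynomial of degree less than $D$, as required; complex-valuedness of $G$ causes no issue, as the expansion and all bounds apply coefficientwise.

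The bound on $P_k$ is immediate: $\norm{P_k}_\infty=\norm{T_k}_{L^\infty[-1,1]}=1$ by the standard fact $|T_k(\cos\theta)|=|\cos(k\theta)|\le1$. The bound on $Q_k$ is the only genuine computation. Applying the triangle inequality to the integral and using both $\norm{G}_{L^\infty[-1,1]^2}\le1$ and $|T_k(x)|\le1$, I would estimate
$$|Q_k(y)|\le\frac{2^{1-\delta_{k,0}}}{\pi}\int_{-1}^1\frac{\d x}{\sqrt{1-x^2}}=\frac{2^{1-\delta_{k,0}}}{\pi}\cdot\pi=2^{1-\delta_{k,0}}\le2,$$
where the middle equality uses $\int_{-1}^1\frac{\d x}{\sqrt{1-x^2}}=[\arcsin x]_{-1}^1=\pi$. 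The factor $2^{1-\delta_{k,0}}$ is exactly what forces the bound $2$ (attained when $k\ne0$) rather than $1$.

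Finally, for efficient computability I would observe that the $Q_k$ arise from the coefficients of $G$ via a fixed linear transformation—change of basis from monomials to Chebyshev polynomials composed with the $x$-integration—so they can be computed exactly in time polynomial in $D$; alternatively a discrete cosine transform on a grid of size $O(D)$ recovers them. Honestly, once the correct formula is in hand this is essentially bookkeeping, so I do not anticipate a serious obstacle. The only point requiring care is keeping the normalization constant $2^{1-\delta_{k,0}}$ consistent with the orthogonality weight, so that the sum is an exact identity and the $k=0$ term is neither double-counted nor mis-weighted.
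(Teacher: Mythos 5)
Your proposal is correct and follows essentially the same route as the paper's proof: both take $P_k=T_k$ and define $Q_k$ as the (normalized) Chebyshev coefficient functions $\frac{2^{1-\delta_{k,0}}}{\pi}\int_{-1}^1 G(x,y)T_k(x)\frac{\d x}{\sqrt{1-x^2}}$, obtain the exact identity from orthogonality, and get the bounds $\norm{P_k}_\infty=1$, $\norm{Q_k}_\infty\le2$ by the same triangle-inequality estimate. The only cosmetic difference is in the efficient-computability remark, where the paper evaluates the integrals as finite sums via its appendix machinery while you invoke a change-of-basis linear map (or DCT); both are valid.
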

        
        \begin{proof}
            Let $P_k:=T_k$ and 
            \begin{equation}
            \tilde{Q}_k(y):=\int_{-1}^1 G(x,y) T_k(x)\frac{\d x}{\sqrt{1-x^2}}.\label{eq:qtildedef}
            \end{equation}
            Let $Q_0=\frac{1}{\pi}\tilde{Q}_0$ and $Q_k=\frac{2}{\pi} \tilde{Q}_k$ for $k>0$. It is well-known that Chebyshev polynomials satisfy $\norm{T_k}_\infty=1$. Also, for $y \in [-1,1]$,
            $$|Q_k(y)| \le \frac{2}{\pi} \int_{-1}^1 |G(x,y)||T_k(x)|\frac{\d x}{\sqrt{1-x^2}} \le \frac{2}{\pi} \int_{-1}^1 \frac{\d x}{\sqrt{1-x^2}} =2.$$
            Chebyshev polynomials form an orthogonal set with respect to the measure $(1-x^2)^{-1} \d x$. Further, $T_k$ has norm $\pi$ for $k=0$ and $\nicefrac{\pi}{2}$ otherwise. It follows that by construction $G(x,y)=\sum\limits_{0\le k<D}P_k(x)Q_k(y)$.
            
            To compute $\tilde{Q}_k$, we use the techniques of \secref{sec:poly approx} and \secref{sec:cheb exp cheb int}: first write $G$ in the form \eqref{eq:p}. and evaluate \eqref{eq:qtildedef} as a sum using \propref{prop:int against cheb as sum}. 
        \end{proof}

        The algorithm described in \secref{sec:normal alg} implements a block-encoding of $f(M)$ with subnormalization factor bounded by $2\alpha D$ assuming that $\abs{f(x+iy)} \le 1$ for $x,y \in [-1,1]$. Because of the $L^\infty$ constraint on $f$ it is nontrivial to implement QET for several interesting functions. However, the approximation $\tilde{f}$ is not required to he holomorphic. This allows for a certain freedom. For example, if we wish to implement QET for the function $f(z)=z^n$, assuming that the eigenvalues are contained on the unit disk, then we actually need to choose a polynomial approximation $\tilde{f}$ with the following properties:
        
        \begin{itemize}
            \item $\abs{\tilde{f}(z)}\le 1$ for $\re z,\im z \in [-1,1]$
            \item $\abs{f(z)-\tilde{f}(z)}<\epsilon$ for $|z| \le 1$, for some approximation parameter $\epsilon$
        \end{itemize}
        
        Because the function $f(z)$ blows up outside of the unit disk, it seems these two conditions are in contention. Perhaps it is possible then that these two conditions can be simultaneously satisfied by a bivariate polynomial with reasonable degree. There are similar questions, for example, for functions such as $f(z)=\nicefrac{\kappa}{z}$ on the domain $\abs{z}>\kappa$. 

\section{On nonlinear transformation of complex amplitudes} \label{sec:apps}
We now demonstrate that the methods of \secref{sec:normal be} naturally provide an extension to the \emph{nonlinear transformation of complex amplitudes} as developed in \cite{GMF}. 

\begin{center}
\begin{tabular}{p{0.85\textwidth}}
    \textbf{Nonlinear Transformation of Complex Amplitudes (NTCA)}: {\em Let $F:\bD\to\C$ be any function. Given a state $\ket{\bv}=\sum c_k\ket{k}$ in the computational basis and $\eps>0$, output (a scaling of) the state $\sum F(c_k)\ket{k}$ to precision $\eps$. }
\end{tabular}
\end{center}

In the setting of \cite{GMF}, we have access to a (unitary) state preparation oracle $U$, as well as its adjoint and controlled versions, such that $U\ket{0}^{\otimes n}=\ket{\bv}$. \cite{GMF} addresses this problem with $\eps=0$ for 
\begin{equation}
    F(z)=F_1(z)=P(\re z)+Q(\im z) \label{eq:gmf fn type}
\end{equation}
for arbitrary continuous functions $P,Q:[-1,1]\to\C$. Specifically, if $P$ and $Q$ can be approximated up to error $\frac{\eps}{4N}$ by degree-$D$ polynomials $\tilde{P}$ and $\tilde{Q}$, respectively, such that $\tilde{F}_1(z):=\tilde{P}(\re z)+\tilde{Q}(\im z)$ approximates $F_1$, and $\g=\sup\limits_{x\in[-1,1]}\max\lcr{\abs{P(x)},\abs{Q(x)}},$ then\footnote{Suppose the normalization factor is $c:=\sum\limits_{0\le k<N}\abs{F(c_k)}^2$.} $\frac{1}{\sqrt{c}}\sum\limits_{0\le k<N}F_1(c_k)\ket{k}$ can be approximated to $L^2$ error $\nicefrac{\eps}{N}$, using in expectation
$$O\lpr{D\g\sqrt{\frac{N}{\sum\limits_{0\le k<N}\abs{\tilde{F}_1(c_k)}^2}}}$$ controlled $U$ and $U^{\dag}$ gates. 

The algorithm in \cite{GMF} uses the technique of \emph{block-encoding of amplitudes}. In particular, let $A=\lpr{I\otimes W^\dag}\tilde{G}(I\otimes W)$ and $B=\lpr{I\otimes W^\dag}\tilde{G}'(I\otimes W)$, where $W$, $\tilde{G}$, and $\tilde{G}'$ are defined as in \cite[Figures 1--3]{GMF}; for completeness, we also include their definitions and a brief explanation in \secref{sec:ntca defs}. Then $A$ and $B$ are block-encodings of matrices which map $\ket{k}$ to $(\re c_k)\ket{k}$ and $(\im c_k)\ket{k}$, respectively, that is, they are block-encodings of diagonal matrices containing the data of the original amplitudes. 

Applying the algorithm from \secref{sec:normal alg} to $A$ and $B$, we may obtain a block-encoding of a matrix that maps $\ket{k}$ to $F_2(c_k)\ket{k}$, for all $F_2$ which are $(\g,k)$-extensible, which is defined in \propertyref{property:extensibility}---allowing for an additional kind of function as those specified in \eqref{eq:gmf fn type}. Thus, this resolves NTCA with expected 
$$O\lpr{D^2\g\sqrt{\frac{N}{\sum\limits_{0\le k<N}\abs{F_2(c_k)}^2}}}$$ 
controlled $U$ and $U^{\dag}$ operations. A concrete instance of a newly-attainable function is 
\begin{align*}
    f(z)&=\frac{5}{1024}(z+\ol{z})^6(z-\ol{z})^4+\frac{17}{512}(z+\ol{z})(z-\ol{z})^8,\\
    \text{or }f(x+iy)&=5x^6y^4+17xy^8.
\end{align*}

\section{MQET for several commuting Hermitian matrices}\label{sec:general comm herm}

    We generalize the ideas in \secref{sec:normal be} following the observation that QET of normal matrices is a special case of MQET of two commuting Hermitian matrices.

    Consider a continuous function $f:[-1,1]^{r+1}\to\bD$ for which we select a polynomial approximation $g$, perhaps via the techniques of \secref{sec:poly approx}. Suppose the maximum total degree of any monomial in $g$ is less than $D$. 
    
    Generalizing the method from Section 3.1, we write down an expansion of $g$ analogous to \eqref{eq:vanilla sum-prod}. 
    Suppose we find polynomials $P_\bs^{(k)},Q_\bs\in\C[z]$ for $\bs$ ranging in $[[D]]^r$ and $k$ ranging from 0 to $r-1$ satisfying
    \begin{equation}
    g(\bx)=\sum\limits_{\bs\in[[D]]^r}Q_\bs(x_r)\prod\limits_{0\le k<r}P_\bs^{(k)}(x_k)
    \label{multivardecomp}
    \end{equation}
    \begin{remark}
        This sum has a factor of $D$ fewer terms than the number of monomials. This is crucial because we avoid applying LCU to each of the monomials, which causes the subnormalization factor to needlessly increase. 
    \end{remark}
    
    In the following subsection, we describe in detail how to use this decomposition, along with QSP, to obtain a block-encoding of $g(\bbA)$, and how the parameters of the block-encoding depend on the decomposition we use. The implementation is summarized in \figref{fig:mqet circuit}. We will find that the subnormalization factor of the block-encoding is proportional to $\norm{\bgb}_1$, where $\bgb\in\R_{\ge0}^{[[D+1]]^r}$ is defined by $\gb_\bs:=\norm{Q_\bs}_\infty\prod\limits_{0\le k<r}\norm{P_\bs^{(k)}}_\infty$. We optimize $\norm{\bgb}_1$ by letting $P_\bs^{(k)}=T_{s_k}$ (the $s_k$th Chebyshev polynomial) and defining $Q_\bs$ by an orthogonal projection formula proposed in \secref{sec:gen subnorm factor}; we show that this ensures $\norm{\bgb}_1\le2D^r$. Our entire algorithm is summarized in \hyperref[sec:mqet alg]{Algorithm 2}.

\subsection{Algorithm}\label{sec:mqet alg}

\begin{figure}
\centering
\leavevmode
\resizebox{\textwidth}{!}{
\Qcircuit @R=2em @C=0.7em {
& \qw & {/^{rd\phantom{+}}} \qw & \qw & \qw & \gate{V^\dag} & \measure{\ket{0}} \qwx[1] & \qw & \measure{\ket{0}} \qwx[1] & \qw & \cdots & & \measure{\ket{0}} \qwx[1] & \qw & \qw & \cdots & & \measure{\ket{s}} \qwx[1] & \qw & \measure{\ket{s}} \qwx[1] & \qw & \cdots & & \measure{\ket{s}} \qwx[1] & \qw & \qw & \cdots & & \gate{V} & \meter \\
& \qw & {/^{m+d}} \qw & \qw & \qw & \qw & \ghost{U_{P_\zero^{(0)}(A_0)}} & \qw & \nghost{U_{P_\zero^{(1)}(A_1)}} & & \cdots & & \nghost{U_{Q_{\zero}(A_r)}} & & \qw & \cdots & & \ghost{U_{P_\bs^{(0)}(A_0)}} & \qw & \nghost{U_{P_\bs^{(1)}(A_1)}} & & \cdots & & \nghost{U_{Q_{\bs}(A_r)}} & & \qw & \cdots & & \qw & \meter \\
& \qw & {/^{m+d}} \qw & \qw & \qw & \qw & \nghost{U_{P_\zero^{(0)}(A_0)}} & & \ghost{U_{P_\zero^{(1)}(A_1)}} & \qw & \cdots & & \nghost{U_{Q_{\zero}(A_r)}} & & \qw & \cdots & & \nghost{U_{P_\bs^{(0)}(A_0)}} & & \ghost{U_{P_\bs^{(1)}(A_1)}} & \qw & \cdots & & \nghost{U_{Q_{\bs}(A_r)}} & & \qw & \cdots & & \qw & \meter \\
& & \vdots & & & & \nghost{U_{P_\zero^{(0)}(A_0)}} & & \nghost{U_{P_\zero^{(1)}(A_1)}} & & & & \nghost{U_{Q_{\zero}(A_r)}} & & & & & \nghost{U_{P_\bs^{(0)}(A_0)}} & & \nghost{U_{P_\bs^{(1)}(A_1)}} & & & & \nghost{U_{Q_{\bs}(A_r)}} & & & & & & \vdots \\
& \qw & {/^{m+d}} \qw & \qw & \qw & \qw & \nghost{U_{P_\zero^{(0)}(A_0)}} & & \nghost{U_{P_\zero^{(1)}(A_1)}} & & \cdots & & \ghost{U_{Q_{\zero}(A_r)}} & \qw & \qw & \cdots & & \nghost{U_{P_\bs^{(0)}(A_0)}} & & \nghost{U_{P_\bs^{(1)}(A_1)}} & & \cdots & & \ghost{U_{Q_{\bs}(A_r)}} & \qw & \qw & \cdots & & \qw & \meter 
  \inputgroupv{2}{5}{0.8em}{4.8em}{r+1} \\
& \qw & {/^{n\phantom{+3}}} \qw & \qw & \qw & \qw & \multigate{-4}{U_{P_\zero^{(0)}(A_0)}} & \qw & \multigate{-4}{U_{P_\zero^{(1)}(A_1)}} & \qw & \cdots & & \multigate{-4}{U_{Q_{\zero}(A_r)}} & \qw & \qw & \cdots & & \multigate{-4}{U_{P_\bs^{(0)}(A_0)}} & \qw & \multigate{-4}{U_{P_\bs^{(1)}(A_1)}} & \qw & \cdots & & \multigate{-4}{U_{Q_{\bs}(A_r)}} & \qw & \qw & \cdots & & \qw 
}
}
\caption{Circuit for block-encoding of $f(A_0,\dots,A_r)$. Here we let $0\le s<D^r$ denote the integer given in base $D$ by $\bs$. See \secref{sec:mqet alg} for the algorithm, and \hyperref[alg:mqet cap]{Algorithm 2} for a pseudocode description. $V$ comes from LCU (\propref{prop:lcu}), and the remaining notation on the top wire is introduced in \secref{sec:circuit notation}. We may let $P_\bs^{(k)}$ be $T_{s_k}$, as discussed in \secref{sec:gen subnorm factor}. }
\label{fig:mqet circuit}
\end{figure}

        We describe an algorithm for obtaining a block-encoding of $g(\bbA)$ given block-encodings $\bU\in\prod\limits_{0\le k\le r}\BE_{\ga_k,m_k}\lpr{A_k;\eps_k}$, as well a decomposition of $g$ as in \eqref{multivardecomp}.
                
        For convenience, we introduce some notation. Let 
        $$T(\bs,\bbA):=Q_\bs(A_r)\prod\limits_{0\le k<r}P_\bs^{(k)}(A_k),$$
        so that $g(\bbA)=\sum\limits_{\bs\in[[D]]^r}T(\bs,\bbA)$. Let also 
        \begin{align}
            \ga:=\prod\limits_{0\le k\le r}\ga_k, &&
            m:=\sum\limits_{0\le k\le r}m_k, &&
            \eps:=\sum\limits_{0\le k\le r}\ga_k\eps_k.\label{eq:ga m eps}
        \end{align}
        
        Let $\bgb\in\R_{\ge0}^{[[D+1]]^r}$ be the vector where $\gb_\bs=\norm{Q_\bs}_\infty\prod\limits_{0\le k<r}\norm{P_\bs^{(k)}}_\infty$. Normalize to $\tilde{P}_\bs^{(k)}:=\frac{P_\bs^{(k)}}{\norm{P_\bs^{(k)}}_\infty}$ and $\tilde{Q}_\bs:=\frac{Q_\bs}{\norm{Q_\bs}_\infty}$. To obtain $g(\bbA)$ as quantum circuitry, we (classically) compute the phase factors $\Phi_\bs^{(k)},\Psi_\bs$ for the polynomials $x\mapsto\tilde{P}_\bs^{(k)}(x),\tilde{Q}_\bs(x)$ (respectively), e.g.\ via \cite{DMWL}. Then, we use QSP to compute $U_{\tilde{P}_\bs^{(k)}(A_k)}\in\BE_{\ga_k,m+2}\lpr{\tilde{P}_\bs^{(k)}(A_k);\eps_k}$ and $U_{\tilde{Q}_\bs(A_k)}\in\BE_{\ga_k,m+2}\lpr{\tilde{Q}_\bs(A_r);\eps_r}$, and compute via \cite[Lemma 53]{GSLW} the block-encoding of the product $U_{T(\bs,\bbA)}\in\BE_{\ga,m+4}\lpr{T(\bs,\bbA);\eps}$ (called $U_{\bs,\bbA}$ in \hyperref[sec:mqet alg]{Algorithm 2}). Finally, 
        $$U_{g(\bbA)}=\LCU_\eps\lpr{\lpr{T(\bs,\bbA)}_{\bs\in[[D]]^r},\bgb}\in\BE_{\ga\norm{\bgb}_1,m+4+d}\lpr{g(\bbA);\eps}.$$

        \begin{algorithm}
        \caption{MQET for several commuting Hermitian matrices}\label{alg:mqet cap}
        \begin{algorithmic}[1]
        \Require $A_k\in\Mat(N)$ Hermitian and $U_k\in\BE_{\ga_k,m}(A_k;\eps_k)$ for $0\le k\le r$; $\lbr{A_{k_1},A_{k_2}}=0$ for all $0\le k_1,k_2\le r$; $f:[-1,1]^{r+1}\to\bD$; $g\in\C\lbr{x_0,\dots,x_r}$ with degree less than $D$ such that $\norm{f-g}_\infty$ is small
        \Ensure  $U_{f(\bbA)} \in \BE_{\ga',m'}(f(\bbA),\eps')$
        \For{$\bs\in[[D]]^r$}
            \State $\lpr{s_0,\dots,s_{r-1}}\gets\bs$
            \State $P_\bs^{(k)} \gets T_{s_k}$ 
            \Comment{for $0\le k<r$}
            \State $Q_\bs(x_r)\gets\prod\limits_{0\le k<r}\frac{2-\delta_{s_k,0}}{\pi}\int_{-1}^1g(\bx)\prod\limits_{0\le k<r}T_{s_k}(x_k)\frac{\d x_k}{\sqrt{1-x_k^2}}$
            \State \Comment{These satisfy $g(\bx)=\sum\limits_{\bs\in[[D]]^r}Q_\bs(x_r)\prod\limits_{0\le k<r}P_{k}(x_k)$}
            \State $\gb_\bs \gets \norm{Q_\bs}_\infty\prod\limits_{0\le k<r}\norm{P_\bs^{(k)}}_\infty$
            \State $\tilde{P}_\bs^{(k)} \gets \frac{P_\bs^{(k)}}{\norm{P_\bs^{(k)}}_\infty}$
            \Comment for $0\le k<r$
            \State $\tilde{Q}_\bs \gets \frac{Q_\bs}{\norm{Q_\bs}_\infty}$
            \State Compute $U_{\tilde{P}_\bs^{(k)}(A_k)}\in\BE_{\ga_k,m+2}\lpr{\tilde{P}_\bs^{(k)}(A_k);\eps_k}$
            \Comment for $0\le k<r$, using QSP
            \State Compute $U_{\tilde{Q}_\bs(A_r)}\in\BE_{\ga_r,m+2}\lpr{\tilde{Q}_\bs(A_r);\eps_r}$ \Comment Using QSP
            \State Compute $U_{\bs,\bbA}\in\BE_{\ga,m+4}\lpr{Q_\bs(A_r)\prod\limits_{0\le k<r}P_\bs^{(k)}(A_k);\eps}$
        \EndFor 
        \State \Return $\LCU_\eps\lpr{\lpr{U_{\bs,\bbA}}_{\bs\in[[D]]^r},\bgb} $
        \end{algorithmic}
    \end{algorithm}
    
    \subsection{Analysis of the algorithm}
        Recall that $\ga$, $m$, and $\eps$ are defined in \eqref{eq:ga m eps}. In the previous section, we kept a running tally of the ancilla count; the final tally was $2m+4+d$. We also track precision---$\eps$---and subnormalization factor---$\ga\norm{\bgb}$---above. 
        
        We can track the number of applications of $U_{A_\ell}$ and $U_{A_\ell}^\dag$. Applying QSP for degree at-most-$D$ polynomials multiplies this quantity by by at most $2D$. Computing products and running LCU adds; the final LCU call is on $D^r$ unitaries. Thus there are at most $2(D-1)rD^r\in O(rD^{r+1})$ instances. 
    
    \subsection{Subnormalization factor}\label{sec:gen subnorm factor}
        By enforcing $P_\bs^{(k)}:=T_{s_k}$ (the $s_k$th Chebyshev polynomial) and, as in \propref{prop:good scaling}, 
        $$Q_\bs(x_r):=\prod\limits_{1\le k\le r}\frac{2-\gd_{s_k,0}}{\pi}\int_{-1}^1g(\bx)\prod\limits_{1\le k\le r}T_{s_k}(x_k)\frac{\d x_k}{\sqrt{1-x_k^2}},$$
        we see that $\norm{P_\bs^{(k)}}_\infty=1$ and $\norm{Q_\bs}_\infty\le2$. This leaves us with $\gb_\bs\le2$. (Also as before, again using \propref{prop:int against cheb as sum} we can efficiently compute $Q_\bs$ given $g$.) Thus, $\norm{\bgb}=\sum\limits_{\bs\in[[D]]^r}2^{z(\bs)}$ where $z(\bs)$ is the number of zeroes appearing in the vector $\bs$. This rewrites to $\sum\limits_{0\le\ell\le r}2^\ell\binom{r}{\ell}D^{r-\ell}=(D+2)^r$, the subnormalization factor asserted in the beginning. 

\section{Conclusions}
    We briefly recapitulate some of the main points from the work above. The core ideas for implementing MQET are that polynomials admit relatively cheap decompositions, from the perspective of block-encoding parameters (as well as computationally for QSP phase factors, as the Chebyshev polynomial phase factors are trivial to compute), and that we harness commutativity (and the fact that the polynomials are tacitly in {\em commuting} variables) to freely plug in the matrix inputs. Then, for QET with a normal matrix, we exploit the observation that the real and imaginary parts are simply two commuting Hermitian matrices and can accordingly be used for bivariate MQET. Computationally, the main difficult tasks given the present state of the art is to select an appropriate polynomial approximation and to compute QSP phase factors; in particular, if working with the Chebyshev basis for all but the final polynomial in MQET, then that last polynomial can be computed from its coefficients as {\em sums} rather than needing to deal with numerical precision in an integral. 
        
\section*{Acknowledgements}

This project came out of Lin Lin's Fall 2021 course ``Quantum Algorithms for Scientific Computation'' at UC Berkeley, and has benefited from conversations with Lin, Subhayan Roy Moulik, and Nikhil Srivastava. YB was supported by NSF grant DMS-1952939. TS was supposed by NSF grant DMS-1646385. ZS was supported by a NSF graduate research fellowship, grant numbers DGE-1752814/2146752.

\appendix

\section{Selecting polynomial approximations}\label{sec:poly approx}

    Consider the task of computing a low-degree uniform-precision-$\eps$ polynomial approximation $p$ to a given continuous function $f:\bS\to\C$ on the square $\bS:=\bS^2=[-1,1]^2$, a special case of the $d$-hypercube $\bS^d:=[-1,1]^d$. 
    
    First, some definitions. Let $\cP_n$ be the complex polynomials of degree at most $n$. Let $\dist_{[a,b]}(f,\cP_n):=\inf\limits_{p\in\cP_n}\norm{f-p}_{L^\infty[a,b]}$, for $f$ any function taking values on $[a,b]$. We say that the {\em $n$th Chebyshev points on $[a,b]$} are $\bT_n=\lpr{t_0^{(n)},\dots,t_n^{(n)}}\in[a,b]^{n+1}$ which attain $\min\limits_{t_\ell\in[a,b]}\max\limits_{x\in[a,b]}\prod\limits_{0\le\ell<n}\abs{x-t_\ell}$. 
    For $0\le k<n$ and $\bT=\lpr{t_1,\dots,t_n}$, define the interpolating polynomials 
    $$L_n^{(k)}(x,\bT):=\prod\limits_{\substack{0\le\ell<n\\\ell\neq k}}\frac{x-t_\ell}{t_k-t_\ell}.$$
    Let $\gw(f,E,\gd):=\sup\limits_{\substack{x,y\in E\\\abs{x,y}<\gd}}\abs{f(x)-f(y)}$, for any compact set $E\subset\R^d$ and any continuous $f:E\to\R$. 
    
    Consider the following result. 
    \begin{theorem}[{\cite[Lemma 2.2]{Plesniak}}]\label{thm:plesniak}
        Consider the domain $P:=\prod\limits_{\ell=1}^d[a_\ell,b_\ell]\subset\R^d$ and any continuous function $f:P\to\R$. Select positive integers $n_\ell$ for $1\le\ell\le d$ and compute $\bT_{n_\ell}$.\footnote{The published statement of this theorem elects to work with Fekete (Legendre) points, but Chebyshev points feature an exponentially better Lebesgue constants and so are advantageous in this setting. The proof is unchanged.} Define 
        \begin{equation}
            p(\bx):=\sum\limits_{\bbk\in\prod\limits_{1\le\ell\le d}[[n_\ell]]}f\lpr{t_{k_1}^{(n_1)},\dots,t_{k_d}^{(n_d)}}\prod\limits_{1\le\ell\le d}L_{n_\ell}^{k_\ell}(x_\ell,\bT_{n_\ell}).\label{eq:p}
        \end{equation}
        Then, we have that\footnote{The published statement of this theorem erroneously omits the supremum on the right-hand side.} 
        $$\sup\limits_{\bx\in P}\abs{f(\bx)-p(\bx)}\le\sup\limits_{\bx\in P}\sum\limits_{1\le\ell\le d}\sum\limits_{\bj\in\prod\limits_{1\le k<\ell}[[n_k]]}(\log n_\ell+2)\dist_{[a_\ell,b_\ell]}\lpr{f\lpr{t_{j_1}^{(n_1)},\dots,t_{j_{\ell-1}}^{(n_{\ell-1})},\cdot,x_{\ell+1},\dots,x_d},\cP_{n_\ell}}.$$
    \end{theorem}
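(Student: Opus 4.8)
The plan is to realize the tensor-product interpolant $p$ of \eqref{eq:p} as an iterated application of univariate interpolation operators, and then to telescope that product so the multivariate error splits into one univariate error per coordinate. For $1\le\ell\le d$, let $I_\ell$ send a continuous function on $P$ to its Lagrange interpolant in the single variable $x_\ell$ at the nodes $\bT_{n_\ell}$, with the remaining coordinates held as parameters, so that $I_\ell h(\bx)=\sum_{0\le k<n_\ell}h(x_1,\dots,t_k^{(n_\ell)},\dots,x_d)L_{n_\ell}^{(k)}(x_\ell,\bT_{n_\ell})$. By construction $p=I_1I_2\cdots I_d f$. Two structural facts drive everything: distinct $I_\ell$ act on disjoint coordinates and hence commute, and each $I_\ell$ acts as the identity on functions that are polynomials of degree at most $n_\ell$ in $x_\ell$ (so that $I_\ell$ reproduces $\cP_{n_\ell}$ in that variable).

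First I would telescope. Setting $J_\ell:=I_1\cdots I_\ell$ and $J_0:=\mathrm{Id}$, commutativity gives $J_\ell=J_{\ell-1}I_\ell$, whence
\[
f-p=\sum_{1\le\ell\le d}(J_{\ell-1}-J_\ell)f=\sum_{1\le\ell\le d}J_{\ell-1}(\mathrm{Id}-I_\ell)f.
\]
It therefore suffices to bound each summand $J_{\ell-1}(\mathrm{Id}-I_\ell)f$ in $L^\infty(P)$.

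Next I would invoke the classical one-dimensional Lebesgue inequality in the $\ell$-th variable. Since $I_\ell$ reproduces $\cP_{n_\ell}$, for each fixed choice of the other coordinates one has $\norm{(\mathrm{Id}-I_\ell)f}_\infty\le(1+\Lambda_{n_\ell})\dist_{[a_\ell,b_\ell]}(f(\dots,\cdot,\dots),\cP_{n_\ell})$, where $\Lambda_{n_\ell}$ is the Lebesgue constant of interpolation at $\bT_{n_\ell}$; the logarithmic bound $1+\Lambda_{n_\ell}\le\log n_\ell+2$ for Chebyshev nodes is exactly the weight appearing in the statement. Because $J_{\ell-1}$ only samples $f$ at the Chebyshev nodes of the first $\ell-1$ coordinates, evaluating the above at those nodes produces precisely the univariate functions $f(t_{j_1}^{(n_1)},\dots,t_{j_{\ell-1}}^{(n_{\ell-1})},\cdot,x_{\ell+1},\dots,x_d)$ indexed by $\bj\in\prod_{1\le k<\ell}[[n_k]]$.

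The final step — and the one I expect to be the main obstacle — is to push the outer operator $J_{\ell-1}$ through this estimate without the interpolation weights blowing up. Expanding $J_{\ell-1}(\mathrm{Id}-I_\ell)f(\bx)$ over node-tuples $\bj$ attaches to each univariate error a factor $\prod_{1\le k<\ell}\abs{L_{n_k}^{(j_k)}(x_k)}$; to recover the clean, unweighted sum over $\bj$ in the statement one must bound every fundamental Lagrange polynomial by $1$, which is the defining extremal property of Fekete-type nodes and is exactly where the choice of interpolation points matters. The delicacy is that for genuine Chebyshev nodes the $L_{n_k}^{(j_k)}$ are \emph{not} pointwise bounded by $1$ (only their absolute sum is, by $\Lambda_{n_k}$), so the footnote's claim that ``the proof is unchanged'' should be read as saying the telescoping and the univariate Lebesgue step are untouched, while the bookkeeping of the outer weights is where one either appeals to the $\abs{L}\le1$ Fekete bound or instead absorbs a product of Lebesgue constants. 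Taking $\sup_{\bx\in P}$ of the resulting double sum over $\ell$ and $\bj$ then delivers the asserted inequality.
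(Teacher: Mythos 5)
Your proposal follows exactly the route of the proof this paper defers to: the paper proves nothing here itself, citing \cite[Lemma 2.2]{Plesniak}, and that proof is precisely your telescoping $f-p=\sum_{\ell}J_{\ell-1}(\mathrm{Id}-I_\ell)f$ followed by the one-variable Lebesgue inequality $\norm{(\mathrm{Id}-I_\ell)g}_{\infty}\le(1+\Lambda_{n_\ell})\dist_{[a_\ell,b_\ell]}(g,\cP_{n_\ell})$ applied to the slices $f\bigl(t_{j_1}^{(n_1)},\dots,t_{j_{\ell-1}}^{(n_{\ell-1})},\cdot,x_{\ell+1},\dots,x_d\bigr)$. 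All of the bookkeeping in your first two steps (commutation of the $I_\ell$, reproduction of polynomials, which slices appear) is correct.

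The obstacle you flag in your last paragraph is genuine, and you have located an inaccuracy in the paper's footnote rather than a defect in your own argument. Plesniak's closing step is the Fekete extremal property $\abs{L^{(j)}(x)}\le1$; for true Chebyshev nodes (the roots of $T_n$) this fails: $L_n^{(0)}(1)=\frac{1}{n}\cot\frac{\pi}{4n}$, which exceeds $1$ for every $n\ge2$ and tends to $\frac{4}{\pi}$, so ``the proof is unchanged'' is not literally correct, and the clean unweighted sum over $\bj$ does not follow termwise. However, neither horn of your concluding dichotomy (retreat to Fekete points, or absorb full products of Lebesgue constants) is the right repair: the missing ingredient is that individual Chebyshev fundamental polynomials, while not bounded by $1$, \emph{are} bounded by an absolute constant. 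From the closed form $L_n^{(j)}(x)=\frac{(-1)^j T_n(x)\sin\theta_j}{n(x-\cos\theta_j)}$ with $\theta_j=\frac{(2j+1)\pi}{2n}$, the elementary bounds $\abs{T_n(\cos\theta)}=\abs{\cos n\theta-\cos n\theta_j}\le n\abs{\theta-\theta_j}$ and $\sin u\ge\frac{2u}{\pi}$ on $\bigl[0,\frac{\pi}{2}\bigr]$ give $\abs{L_n^{(j)}(x)}\le\pi$ uniformly in $n$, $j$, and $x\in[-1,1]$. Feeding this into your expansion bounds the $\ell$-th telescoping term by $\pi^{\ell-1}(\log n_\ell+2)\sum_{\bj}\dist_{[a_\ell,b_\ell]}(\cdots,\cP_{n_\ell})$, i.e.\ the inequality holds as stated only up to an extra factor of at most $\pi^{d-1}$ on the right-hand side. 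That factor is invisible in the paper's downstream use ($d=2$, with big-$O$ conclusions in \thmref{thm:big 2var approx result}), but it does mean the theorem as printed --- Chebyshev points, clean constants, arbitrary $d$ --- is not delivered by the ``unchanged'' proof. With the uniform bound $\abs{L_n^{(j)}}\le\pi$ inserted in place of your dichotomy, your argument becomes a complete proof of the statement so corrected.
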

    Notice that $\deg p=\sum\limits_{1\le\ell\le d}n_\ell$. We are interested in the case of $d=2$ and $a_1=a_2=-1$, $b_1=b_2=1$, and $\deg p=n_1+n_2$. It just so happens that:
    \begin{proposition}[classical]
        The $n$th Chebyshev points on $[-1,1]$ are $t_\ell^{(n)}=\cos\lpr{\frac{2\pi\ell}{n}}$. 
    \end{proposition}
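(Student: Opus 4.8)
The plan is to decode the defining minimax property as the classical problem of minimizing the sup-norm of a monic polynomial, and then to read off the optimal nodes in closed form. Set $\omega_{\bT}(x):=\prod_{0\le\ell<n}(x-t_\ell)$; then $\max_{x\in[-1,1]}\prod_{0\le\ell<n}\abs{x-t_\ell}$ is exactly $\norm{\omega_{\bT}}_\infty$, and as $\bT$ ranges over admissible node tuples the map $\bT\mapsto\omega_{\bT}$ traverses the monic degree-$n$ polynomials with all roots in $[-1,1]$. Since the unconstrained minimizer $2^{1-n}T_n$ already has all its roots in $(-1,1)$, minimizing over this restricted class coincides with the unconstrained problem, so the $n$th Chebyshev points are precisely the extremal node set of $2^{1-n}T_n$. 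By the affine change of variables $x\mapsto\frac{b-a}{2}x+\frac{a+b}{2}$, which sends monic polynomials to scalar multiples of monic polynomials and preserves the minimizer, it suffices to work on $[-1,1]$; I would first establish the extremal characterization there and then extract the nodes.

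For the extremal characterization I would run the standard equioscillation/exchange argument. Using $T_n(\cos\theta)=\cos(n\theta)$, the monic polynomial $2^{1-n}T_n$ has sup-norm $2^{1-n}$ and attains $\pm 2^{1-n}$ with alternating signs at $n+1$ points of $[-1,1]$. If a competing monic degree-$n$ polynomial $p$ satisfied $\norm{p}_\infty<2^{1-n}$, then $2^{1-n}T_n-p$ would have degree at most $n-1$ yet change sign across each of the $n$ alternation gaps, forcing at least $n$ distinct zeros, a contradiction. Hence the minimizer is $2^{1-n}T_n$, and the Chebyshev points are the distinguished points of this polynomial.

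The substitution $x=\cos\theta$ with $\theta\in[0,\pi]$ reduces locating these points to a trigonometric equation of the form $\cos(m\theta)\in\{0,\pm1\}$, whose solutions are rational multiples of $\pi$ and hence nodes of the form $\cos(c\pi\ell/n)$. The main obstacle, and precisely the step that pins down the stated closed form $t_\ell^{(n)}=\cos(2\pi\ell/n)$, is the bookkeeping in this final identification: one must fix the convention implicit in the definition, namely how the $n+1$ listed nodes $t_0^{(n)},\dots,t_n^{(n)}$ are enumerated and whether they are read off as the zeros or as the alternation points of the extremal polynomial, since these choices determine both the frequency $m$ and the angular normalization $c$. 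I would carry this matching out explicitly, tracking endpoint inclusion and the number of distinct values produced, and then verify directly that the resulting configuration realizes the minimax value, thereby confirming the formula as stated.
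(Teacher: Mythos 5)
Your overall framework --- recasting the minimax node problem as minimizing the sup-norm over monic degree-$n$ polynomials with all roots in $[-1,1]$, observing that the unconstrained minimizer $2^{1-n}T_n$ lies in that class, and proving its optimality by the standard alternation/sign-change argument --- is the correct classical route, and for what it is worth the paper supplies no proof at all (the proposition is simply labeled ``classical''), so there is no internal argument to compare against. The genuine gap sits at precisely the step you defer as ``bookkeeping'': the nodes your argument produces are the \emph{zeros} of $T_n$, namely $t_\ell=\cos\frac{(2\ell+1)\pi}{2n}$ for $0\le\ell<n$, and no choice of enumeration convention or angular normalization turns these into the stated $\cos\frac{2\pi\ell}{n}$. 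The stated points are instead the roots of $T_n-1$ (with interior roots doubled), since
\[
\prod_{0\le\ell<n}\left(x-\cos\frac{2\pi\ell}{n}\right)=2^{1-n}\left(T_n(x)-1\right),
\]
whose sup-norm on $[-1,1]$ is $2^{2-n}$ --- twice the optimal value $2^{1-n}$. So your proposed final check (``verify directly that the resulting configuration realizes the minimax value'') would refute, rather than confirm, the formula as printed; the proposition appears to contain a typo. Note also the internal mismatch already present in the paper's definition: it lists $n+1$ nodes $t_0^{(n)},\dots,t_n^{(n)}$ but takes the product over only $n$ of them; the $(n+1)$-point Chebyshev--Lobatto convention would give $\cos\frac{\pi\ell}{n}$ for $0\le\ell\le n$, while the minimax criterion as written forces the $n$ Chebyshev zeros. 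A complete write-up must either derive $\cos\frac{(2\ell+1)\pi}{2n}$ and flag the discrepancy with the statement, or prove a false formula; deferring the identification does not avoid the fork.

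Two smaller points. First, your alternation argument shows that $2^{1-n}T_n$ is \emph{a} minimizer, but the proposition's ``are'' requires uniqueness of the optimal node set; this is also classical (run the same sign-change count against an arbitrary minimizer $p$ with $\|p\|_\infty=2^{1-n}$, using that $2^{1-n}T_n-p$ vanishes or has a weak sign alternation at the $n+1$ extremal points, to force $p=2^{1-n}T_n$), and should be stated rather than left implicit. Second, the affine-rescaling remark about $[a,b]$ is harmless but unnecessary here, since the statement concerns $[-1,1]$ only.
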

    Thus, our task is greatly simplified, as the bound from \thmref{thm:plesniak} reduces to
    $$\sum\limits_{0\le j<n_1}(\log n_2+2)\dist_{[-1,1]}\lpr{f\lpr{t_j^{(n_1)},\cdot},\cP_{n_2}}$$
    which is in turn boundable via:
    \begin{theorem}[{Jackson's inequality, cf.\ \cite[Theorem 1.4]{Rivlin}}]
        Let $f\in\cC[-1,1]$ be continuous. Then, $$\dist_{[-1,1]}(f,\cP_n)\le6\gw\lpr{f,[-1,1],\frac{1}{n}}.$$
    \end{theorem}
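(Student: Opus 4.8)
The plan is to pass from algebraic approximation on $[-1,1]$ to trigonometric approximation on the circle via $x=\cos\theta$, settle the periodic analogue by convolving against a suitable nonnegative kernel, and transfer back. First I would set $g(\theta):=f(\cos\theta)$, which is continuous, $2\pi$-periodic, and even. Since cosine is $1$-Lipschitz, $\abs{\cos\theta_1-\cos\theta_2}\le\abs{\theta_1-\theta_2}$, so the modulus of continuity can only improve under the substitution: $\gw(g,[-\pi,\pi],\delta)\le\gw(f,[-1,1],\delta)$ for all $\delta$. It therefore suffices to build an \emph{even} trigonometric polynomial $T$ of degree at most $n$ with $\norm{g-T}_\infty\le6\,\gw(g,[-\pi,\pi],\frac{1}{n})$. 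Evenness is the key point: such a $T$ is a combination of the $\cos(k\theta)=T_k(\cos\theta)$ for $0\le k\le n$, hence equals $p(\cos\theta)$ for a unique $p\in\cP_n$, and then $\norm{f-p}_{L^\infty[-1,1]}=\norm{g-T}_\infty$, delivering the theorem.

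For the periodic approximation I would introduce a Jackson kernel $K_n$: a nonnegative trigonometric polynomial of degree at most $n$, normalized by $\frac{1}{2\pi}\int_{-\pi}^\pi K_n(\theta)\,\d\theta=1$ and with first absolute moment controlled as $\frac{n}{2\pi}\int_{-\pi}^\pi\abs{\theta}K_n(\theta)\,\d\theta\le5$. The standard construction takes $K_n$ proportional to $\lpr{\frac{\sin(m\theta/2)}{\sin(\theta/2)}}^4$ with $m\approx n/2$, which is manifestly nonnegative, even, and of degree at most $n$. I then define the approximant by convolution, $T(\phi):=\frac{1}{2\pi}\int_{-\pi}^\pi g(\phi-\theta)K_n(\theta)\,\d\theta$; this is automatically a trigonometric polynomial of degree at most $n$, and averaging an even $g$ against an even kernel preserves evenness.

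The error estimate then proceeds mechanically. Using $\frac{1}{2\pi}\int K_n=1$ to write $g(\phi)-T(\phi)=\frac{1}{2\pi}\int_{-\pi}^\pi\lpr{g(\phi)-g(\phi-\theta)}K_n(\theta)\,\d\theta$, I would bound the integrand by $\gw(g,[-\pi,\pi],\abs{\theta})\le(1+n\abs{\theta})\,\gw(g,[-\pi,\pi],\frac{1}{n})$, invoking the standard scaling inequality $\gw(g,E,\lambda\delta)\le(1+\lambda)\gw(g,E,\delta)$ with $\lambda=n\abs{\theta}$, $\delta=\frac{1}{n}$. The normalization and moment bounds then give $\norm{g-T}_\infty\le\gw(g,[-\pi,\pi],\frac{1}{n})\lpr{1+n\cdot\frac{1}{2\pi}\int\abs{\theta}K_n}\le6\,\gw(g,[-\pi,\pi],\frac{1}{n})$, which combined with the reduction above proves the claim. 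The one genuinely quantitative point---and the main obstacle---is the kernel: one must exhibit a nonnegative degree-$\le n$ trigonometric polynomial whose mass concentrates near $\theta=0$ sharply enough to force the first absolute moment below $\frac{5}{n}$. The Fej\'er kernel by itself fails, as its first moment is only $O(\log n/n)$; squaring it to form the Jackson kernel removes the logarithm, and verifying the explicit moment constant that yields $6$ is the crux of the argument.
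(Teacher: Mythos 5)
The paper gives no proof of this statement at all---it is imported as a classical result with the citation to Rivlin, and the proof in that reference is precisely the argument you outline (the $x=\cos\theta$ reduction to even trigonometric approximation, convolution against the squared-Fej\'er/Jackson kernel, and the scaling inequality $\omega(\lambda\delta)\le(1+\lambda)\omega(\delta)$), so your proposal is correct and matches the intended source. One remark on the step you defer: the first-moment bound you need is true, but it is more delicate than it may appear---Cauchy--Schwarz against the exact second moment of the Jackson kernel with parameter $m\approx n/2$ (whose normalized first Fourier coefficient satisfies $1-\rho_1=\frac{3}{2m^2+1}$, so the second moment is at most $\frac{\pi^2}{2}(1-\rho_1)$) gives a first moment of only about $5.44/n$, just missing your stated target of $5/n$, so landing under the constant $6$ requires either the sharper direct evaluation of the first moment (which gives roughly $2.7/n$) or a slight restructuring of the modulus estimate.
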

    and its explicit, more general version
    \begin{theorem}[{Jackson's inequality, cf.\ \cite[Theorem 2.1]{Plesniak}}]
        Let $f\in\cC^k[a,b]$ and $0\le k<n$. Then, $$\dist_{[a,b]}(f,\cP_n)\le\frac{\lpr{\frac{\pi}{4}(b-a)}^k}{k!\binom{n+1}{k}}\norm{f^{(k)}}_{L^\infty[a,b]}.$$
    \end{theorem}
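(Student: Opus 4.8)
The plan is to deduce the derivative-form estimate from repeated use of a sharp \emph{one-step} inequality that trades one unit of polynomial degree against one derivative of $f$, and to dispose of the general interval by affine rescaling so that all the approximation-theoretic work happens on $[-1,1]$.

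First I would reduce to $[-1,1]$. Write $\phi(t):=\frac{a+b}{2}+\frac{b-a}{2}t$ and $\tilde f:=f\circ\phi$. Since $p\mapsto p\circ\phi$ is a degree-preserving linear bijection of $\cP_n$, we have $\dist_{[a,b]}(f,\cP_n)=\dist_{[-1,1]}(\tilde f,\cP_n)$, while the chain rule gives $\tilde f^{(k)}=\lpr{\tfrac{b-a}{2}}^{k}\bigl(f^{(k)}\circ\phi\bigr)$ and hence $\norm{\tilde f^{(k)}}_{L^\infty[-1,1]}=\lpr{\tfrac{b-a}{2}}^{k}\norm{f^{(k)}}_{L^\infty[a,b]}$. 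Thus it suffices to prove the statement for $[a,b]=[-1,1]$ with constant $\frac{(\pi/2)^k}{k!\binom{n+1}{k}}$; substituting $\tilde f$ and invoking the last identity reinstates the factor $\lpr{\tfrac{b-a}{2}}^{k}$ and produces exactly the claimed $\frac{(\pi(b-a)/4)^k}{k!\binom{n+1}{k}}$. Abbreviating $E_n(h):=\dist_{[-1,1]}(h,\cP_n)$, the entire problem now rests on the one-step inequality
$$E_n(h)\le\frac{\pi}{2(n+1)}\,E_{n-1}(h'),\qquad h\in\cC^1[-1,1],\ n\ge1.$$

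Granting this, I would iterate it along $f,f',\dots,f^{(k-1)}$ and telescope the denominators:
$$E_n(f)\le\lpr{\frac{\pi}{2}}^{k}\frac{E_{n-k}\lpr{f^{(k)}}}{(n+1)\,n\,(n-1)\cdots(n-k+2)}=\frac{(\pi/2)^k}{k!\binom{n+1}{k}}\,E_{n-k}\lpr{f^{(k)}},$$
where the product of the $k$ descending factors equals $\frac{(n+1)!}{(n+1-k)!}=k!\binom{n+1}{k}$. Because $k<n$ forces $n-k\ge1$, the zero polynomial lies in $\cP_{n-k}$, so $E_{n-k}\lpr{f^{(k)}}\le\norm{f^{(k)}}_{L^\infty[-1,1]}$; feeding this in and undoing the affine reduction gives the theorem. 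Both the telescoping and this last trivial bound are purely mechanical.

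The main obstacle is therefore the one-step inequality, and specifically securing the genuine gain of a full factor $\frac{1}{n+1}$: a naive argument that antidifferentiates the best approximant of $h'$ only yields a constant independent of $n$. I would obtain the gain through the substitution $x=\cos\theta$, under which a continuous $h$ on $[-1,1]$ lifts to the even $2\pi$-periodic function $H(\theta):=h(\cos\theta)$; even cosine polynomials of degree $\le n$ are exactly the lifts of $\cP_n$, and symmetrizing an arbitrary trigonometric best approximant shows $E_n(h)$ equals the trigonometric best-approximation error of $H$. The sharp periodic Jackson/Favard estimate then supplies the Favard constant $K_1=\frac{\pi}{2}$ — i.e.\ the factor $\frac{\pi}{2(n+1)}$ — via the Jackson kernel (or de la Vallée-Poussin means). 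The genuinely delicate point, and the reason this is not a one-line translation, is that differentiation does not commute cleanly with the lift: $H'(\theta)=-\sin\theta\,h'(\cos\theta)$, so the trigonometric derivative carries the weight $\sqrt{1-x^2}$ and one is really controlling a \emph{weighted} approximation of $h'$ rather than the unweighted quantity $E_{n-1}(h')$. Reconciling this weight so that the periodic estimate transfers to the algebraic one-step inequality with the correct constant is where the real work lies, and it is precisely this reconciliation that is packaged in the cited \cite[Theorem 2.1]{Plesniak}.
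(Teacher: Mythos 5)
The paper itself offers no proof of this statement---it is imported wholesale from \cite[Theorem 2.1]{Plesniak}---so your attempt must be judged purely on its own completeness. Your outer scaffolding is correct: the affine substitution does convert $(\pi/2)^k$ into $\lpr{\pi(b-a)/4}^k$, the product of the $k$ telescoped denominators $(n+1)n\cdots(n-k+2)$ equals $k!\binom{n+1}{k}$, and the trivial bound $\dist_{[-1,1]}\lpr{f^{(k)},\cP_{n-k}}\le\norm{f^{(k)}}_{L^\infty[-1,1]}$ is valid because $k<n$ keeps the degree nonnegative. But the argument has a genuine gap at its core: writing $E_n(h):=\dist_{[-1,1]}(h,\cP_n)$, the one-step inequality $E_n(h)\le\frac{\pi}{2(n+1)}E_{n-1}(h')$ carries all of the analytic content of the theorem, and you do not prove it---you sketch the cosine lift, raise the issue of the weight $\sqrt{1-x^2}$, and then announce that resolving it is ``precisely what is packaged'' in the cited reference. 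Deferring the decisive step to the very result being proved leaves the proof circular as written.

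The gap is closable with exactly the tools you name, applied in the right order. Given $h\in\cC^1[-1,1]$, first antidifferentiate: let $p^*\in\cP_{n-1}$ be the best approximant of $h'$ and set $g(x):=h(x)-\int_0^xp^*(t)\,\d t$, so that $E_n(g)=E_n(h)$ (the subtracted antiderivative lies in $\cP_n$) and $\norm{g'}_{L^\infty[-1,1]}=E_{n-1}(h')$. Only now lift: $G(\theta):=g(\cos\theta)$ is even and $2\pi$-periodic, and $G'(\theta)=-\sin\theta\,g'(\cos\theta)$ gives $\norm{G'}_\infty\le\norm{g'}_\infty$. The weight $\abs{\sin\theta}\le1$ works in your favor here, because what you need is the derivative-norm (Favard) form of the periodic estimate, $E_n^*(G)\le\frac{\pi}{2(n+1)}\norm{G'}_\infty$ with Favard constant $K_1=\pi/2$ (where $E_n^*$ denotes the error of best approximation by trigonometric polynomials of degree at most $n$), rather than the form with $E_{n-1}^*(G')$ on the right-hand side, which would force you to convert a trigonometric error of the weighted derivative back into $E_{n-1}(h')$---that conversion is the phantom difficulty you ran into. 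Your own symmetrization remark gives $E_n(g)=E_n^*(G)$, and chaining these facts yields $E_n(h)=E_n(g)=E_n^*(G)\le\frac{\pi}{2(n+1)}\norm{g'}_\infty=\frac{\pi}{2(n+1)}E_{n-1}(h')$. With the one-step inequality established this way---resting only on the classical periodic Favard (Akhiezer--Krein) inequality, a result independent of \cite{Plesniak}---your telescoping argument completes the proof.
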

    Notice that the denominator grows as $n^k$ for $n\gg k$, so if $\norm{f^{(k)}}_{L^\infty[a,b]}\ll n^k$ then the bound is interesting. 
    
    Therefore we can apply the bound to $\re f$ and $\im f$, supposing that each has the appropriate differentiability along each ``slice.'' Define the quantity $M_{\mathrm{r},k}(f):=\sup\limits_{t\in[-1,1]}\norm{\partial_2^k(\re f)(t,\cdot)}_{L^\infty[-1,1]}$ and similarly $M_{\mathrm{i},k}(f)$ for the imaginary part. Let $M_k(f):=\max\{M_{\mathrm{r},k}(f),M_{\mathrm{i},k}(f)\}$. 
    
    We thus have in full: 
    
    \begin{theorem}\label{thm:big 2var approx result}
        For all $n_1,n_2\in\N$, let $f\in\cC[-1,1]^2$ satisfy $\re f(t,\cdot),\im f(t,\cdot)\in\cC^k[-1,1]$ for all $t\in[-1,1]$ and some $0\le k\le n_1,n_2$. Then $p$ as in \eqref{eq:p} satisfies 
        $$\sup\limits_{\bx\in[-1,1]^2}\abs{f(\bx)-p(\bx)}\le\frac{\lpr{\nicefrac{\pi}{2}}^kM_k(f)}{k!}\frac{n_1\log n_2}{\binom{n_2+1}{k}}\in\tilde{O}\lpr{n^{1-k}}$$
        for $n_1,n_2\approx\nicefrac{n}{2}$. 
    \end{theorem}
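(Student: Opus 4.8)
The plan is to chain together the three ingredients assembled just above: the interpolation-error bound of Theorem~\ref{thm:plesniak}, its reduction to a single sum over the first-variable Chebyshev nodes, and the explicit Jackson inequality applied slice by slice. Concretely, I would start from the reduced estimate derived immediately before the statement, namely that $p$ from \eqref{eq:p} satisfies
\[
\sup_{\bx\in[-1,1]^2}|f(\bx)-p(\bx)|\le\sum_{0\le j<n_1}(\log n_2+2)\,\dist_{[-1,1]}\!\left(f\!\left(t_j^{(n_1)},\cdot\right),\cP_{n_2}\right),
\]
so that the entire problem collapses to bounding, uniformly in $j$, the one-variable approximation defect of the slice $s\mapsto f(t_j^{(n_1)},s)$.

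For each fixed node $t_j^{(n_1)}$ I would split the slice into its real and imaginary parts. By hypothesis both $\re f(t_j^{(n_1)},\cdot)$ and $\im f(t_j^{(n_1)},\cdot)$ lie in $\cC^k[-1,1]$, so the explicit Jackson inequality applies on $[-1,1]$, where $b-a=2$ and hence the constant is $(\tfrac{\pi}{4}\cdot2)^k=(\pi/2)^k$; this gives $\dist_{[-1,1]}(\re f(t_j^{(n_1)},\cdot),\cP_{n_2})\le\frac{(\pi/2)^k}{k!\binom{n_2+1}{k}}\norm{\partial_2^k\re f(t_j^{(n_1)},\cdot)}_{L^\infty[-1,1]}$, and likewise for the imaginary part. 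Bounding the two derivative sup-norms by $M_{\mathrm{r},k}(f)$ and $M_{\mathrm{i},k}(f)$ and recombining the real approximants into a single complex one controls the complex defect by $\frac{(\pi/2)^kM_k(f)}{k!\binom{n_2+1}{k}}$, up to a benign constant absorbed into the $M_k(f)$ term arising from joining the real and imaginary errors.

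Summing the $n_1$ identical slice bounds and pulling out the $\log n_2+2$ factor (replaced by $\log n_2$ up to the usual harmless constant) yields the displayed closed form, and the asymptotics follow from $\binom{n_2+1}{k}=\Theta(n_2^k/k!)$ for fixed $k$ as $n_2\to\infty$: with $n_1,n_2\approx n/2$ the fraction $\frac{n_1\log n_2}{\binom{n_2+1}{k}}$ is $O(n^{1-k}\log n)=\tilde O(n^{1-k})$. I expect the only genuinely non-mechanical step to be the reduction to the single second-variable sum, i.e.\ the dropping of the first-direction ($\ell=1$) interpolation error from Theorem~\ref{thm:plesniak}; this is legitimate precisely when that contribution is negligible—for instance when $f$ is polynomial of degree less than $n_1$ in its first argument, as in the application to computing the coefficients $Q_\bs$—and otherwise contributes a symmetric $\tilde O(n^{1-k})$ term governed by first-variable smoothness. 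Everything else, the slice-wise Jackson estimate, the real/imaginary recombination, and the binomial asymptotic, is routine bookkeeping.
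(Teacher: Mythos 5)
Your proposal is correct and takes essentially the same route as the paper, which gives no separate proof but obtains the theorem ("We thus have in full") by chaining Theorem~\ref{thm:plesniak}, the Chebyshev-point simplification, and the explicit Jackson inequality applied slice-by-slice to $\re f(t_j^{(n_1)},\cdot)$ and $\im f(t_j^{(n_1)},\cdot)$, exactly as you do, with the same $(\tfrac{\pi}{4}\cdot 2)^k=(\pi/2)^k$ constant and the same binomial asymptotics. Your explicit flagging of the dropped $\ell=1$ (first-variable) interpolation error is in fact \emph{more} careful than the paper itself, whose ``reduces to'' step silently discards that term even though the theorem's hypotheses impose no smoothness in the first variable; that omission is a gap in the paper's presentation rather than in your argument.
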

    
    Of course, this is worst-case behavior, so in practice $f$ will converge to its best polynomial approximant even faster. 
    
    While it is often the case that we focus on functions defined only on $\bD\subset\bS\subset\C$, in our setting we assume that any such function of focus extends continuously to all of $\bS$ in such a way that allows us to apply \thmref{thm:big 2var approx result}, i.e.\ that it extends with $k$-times-differentiable real and imaginary parts. We name this the following: 
    \begin{property}[$(\g,k)$-extensibility]\label{property:extensibility}
        A function $f:\bD\to\C$ is said to be {\em $(\g,k)$-extensible} if there exists an extension $\tilde{f}:\bS\to\C$ satisfying $\tilde{f}=f$ on $\bD$, $\norm{\tilde{f}}_{L^\infty\bS}\le\g$, and $\re\tilde{f},\im\tilde{f}\in\cC^k[-1,1]$. 
    \end{property}
    
    \thmref{thm:big 2var approx result} has a natural version for $\bS^d$. For $f\in\cC\bS^d$, let 
    \begin{align*}
        M_{\mathrm{r},k}(f)&:=\max\limits_{2\le\ell\le d}\sup\limits_{\bt\in\bS^{d-1}}\norm{\partial_\ell^k(\re f)(t_1,\dots,t_{\ell-1},\cdot,t_\ell,\dots,t_{d-1})}_{L^\infty[-1,1]},\\
        M_{\mathrm{i},k}(f)&:=\max\limits_{2\le\ell\le d}\sup\limits_{\bt\in\bS^{d-1}}\norm{\partial_\ell^k(\im f)(t_1,\dots,t_{\ell-1},\cdot,t_\ell,\dots,t_{d-1})}_{L^\infty[-1,1]},\\
    \end{align*}
    and $M_k(f):=\max\lcr{M_{\mathrm{r},k},M_{\mathrm{i},k}}$. The same reasoning as before gets:
    \begin{theorem}
        For all $n_1,\dots,n_d\in\N$, let $f\in\cC[-1,1]^2$ satisfy 
        $$\re f(t_1,\dots,t_{\ell-1},\cdot,t_\ell,\dots,t_{d-1}),\im f(t_1,\dots,t_{\ell-1},\cdot,t_\ell,\dots,t_{d-1})\in\cC^k[-1,1]$$
        for all $\bt\in\bS^{d-1}$ and $2\le\ell\le d$ and some $0\le k\le\min\limits_{2\le\ell\le d}n_\ell$. Then $p$ as in \eqref{eq:p} satisfies 
        $$\sup\limits_{\bx\in\bS^d}\abs{f(\bx)-p(\bx)}\le\frac{\lpr{\nicefrac{\pi}{2}}^kM_k(f)}{k!}\sum\limits_{2\le\ell\le d}\frac{\log n_\ell}{\binom{n_\ell+1}{k}}\prod\limits_{1\le j<\ell}n_j.$$
    \end{theorem}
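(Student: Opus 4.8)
The plan is to reproduce the reasoning behind \thmref{thm:big 2var approx result} essentially verbatim in $d$ variables, since every ingredient used there is dimension-agnostic. First I would invoke \thmref{thm:plesniak} with domain $P=\bS^d$ and $a_\ell=-1$, $b_\ell=1$ for every $1\le\ell\le d$, taking the interpolation nodes in each coordinate to be the $n_\ell$th Chebyshev points (legitimate since, by the classical proposition recalled above, these are the explicit points $\cos\lpr{\nicefrac{2\pi j}{n_\ell}}$). This produces $p$ exactly as in \eqref{eq:p} and bounds $\sup_{\bx\in\bS^d}\abs{f(\bx)-p(\bx)}$ by the nested sum $\sum_{\ell}\sum_{\bj}(\log n_\ell+2)\dist_{[-1,1]}(\cdots,\cP_{n_\ell})$, where for each direction $\ell$ the inner sum ranges over $\bj\in\prod_{1\le j<\ell}[[n_j]]$ and the argument of $\dist$ is the one-variable slice $f\lpr{t_{j_1}^{(n_1)},\dots,t_{j_{\ell-1}}^{(n_{\ell-1})},\cdot,x_{\ell+1},\dots,x_d}$ in the $\ell$th variable.

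Next, for each $\ell$ and each such slice I would apply the explicit Jackson inequality (\cite[Theorem 2.1]{Plesniak}) on $[-1,1]$; since $b-a=2$ we have $\nicefrac{\pi}{4}(b-a)=\nicefrac{\pi}{2}$, so $\dist_{[-1,1]}(\cdot,\cP_{n_\ell})$ is at most $\frac{(\nicefrac{\pi}{2})^k}{k!\binom{n_\ell+1}{k}}$ times the $L^\infty$-norm of the $k$th derivative of the slice in its $\ell$th argument. Because $f$ is complex-valued, I would apply this to $\re f$ and $\im f$ separately and recombine, so that the relevant derivative norm is controlled by $\max\{M_{\mathrm r,k},M_{\mathrm i,k}\}=M_k(f)$ up to the harmless constant coming from the real/imaginary split. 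The crucial point is that this slice derivative norm is bounded by $M_k(f)$ uniformly over the frozen nodes $t_{j_i}^{(n_i)}$ and the free variables $x_{\ell+1},\dots,x_d$: this is exactly what the suprema defining $M_{\mathrm r,k}$ and $M_{\mathrm i,k}$ (over $\bt\in\bS^{d-1}$ and over $2\le\ell\le d$) provide. Consequently the outer supremum over $\bx$ in \thmref{thm:plesniak} can be discharged and replaced by the single constant $M_k(f)$.

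Finally I would count terms. For direction $\ell$ the inner sum over $\bj\in\prod_{1\le j<\ell}[[n_j]]$ has exactly $\prod_{1\le j<\ell}n_j$ summands, each contributing the identical bound $\frac{(\nicefrac{\pi}{2})^kM_k(f)}{k!\binom{n_\ell+1}{k}}$, and absorbing the Lebesgue-constant factor $\log n_\ell+2$ into $\log n_\ell$ (as in the two-variable case) gives the $\ell$th summand $\frac{\log n_\ell}{\binom{n_\ell+1}{k}}\prod_{1\le j<\ell}n_j$. Summing over $2\le\ell\le d$---the range in which smoothness is assumed, matching the definition of $M_k(f)$---yields precisely the claimed bound.

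I expect the only real obstacle to be notational bookkeeping rather than mathematics: one must keep straight, for each $\ell$, which coordinates are frozen at Chebyshev nodes (the first $\ell-1$), which is the approximation variable (the $\ell$th), and which remain free (the last $d-\ell$), and then verify that the resulting one-variable slice is of exactly the form whose $\cC^k$-norm is dominated by $M_k(f)$. A secondary point needing care is the passage from the complex-valued distance to the real and imaginary parts, together with the (constant-level) treatment of the excluded first-coordinate term; neither affects the stated leading behavior.
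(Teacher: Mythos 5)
Your proposal takes essentially the same route as the paper: the paper proves this theorem by declaring it follows from ``the same reasoning'' as \thmref{thm:big 2var approx result}, namely invoking \thmref{thm:plesniak} with Chebyshev nodes on $\bS^d$, applying the explicit Jackson inequality to each one-variable slice (real and imaginary parts separately), bounding the slice derivatives uniformly by $M_k(f)$, and counting the $\prod_{1\le j<\ell}n_j$ summands in each direction $2\le\ell\le d$. The bookkeeping points you flag (frozen versus free coordinates, absorbing $\log n_\ell+2$ into $\log n_\ell$, the real/imaginary split, and the treatment of the first coordinate) are handled with exactly the same level of informality in the paper itself, so your write-up is faithful to its proof.
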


\section{Exponentiation of normal matrices}\label{sec:hamsim}
    
    The core observation---that normal $M$ decomposes into two attainable communiting Hermitians---can be used also for `Hamiltonian' simulation, where the goal is to obtain access to a block-encoding of $e^{Mt}$ for $t\in\R$, though as $M$ is not actually Hermitian the resulting exponential is itself not unitary. Nonetheless, it is plausible to desire such a matrix function, and the form \eqref{eq:split} readily resolves this task: 
    \begin{equation}\label{expsplit}
        e^{Mt}=e^{At}e^{iBt}
    \end{equation}
    precisely because $A$ and $B$ (or really, $At$ and $iBt$) commute. $e^{At}$ is unitary and achievable by standard methods (e.g.\ Trotterization, cf.\ \cite[Chapter 5]{Lin}). $e^{iBt}$ is not unitary but is approximable using a block-encoding of $B$ and then either QET as in \cite[\S7.5]{Lin}, by choosing a polynomial approximation to $x\mapsto e^{ixt}$, or as in \cite{TOSU 2}, by choosing a quadrature. It is then a matter of taking the product, adding the ancilla counts, and updating precisions accordingly (\propref{prop:be prod}).

\section{Optimizing the number of block-encoding instances}\label{sec:qetn circuit depth}

        QSP for block-encodings of Hermitian matrices entails using linear-in-polynomial-degree-many instances of the block-encoding, while our method is quadratic. We use relatively little in our method specific to normalcy, after using the existence of (commuting) $A=\re M$ and $B=\im M$. One might hope to find a modification where this count is $o(D^2)$ without sacrificing too many ancillae. Unfortunately, this turns out to not be possible with any similar scheme: 
        
        \begin{proposition}
            Let $G\in\C[x,y]$ have degree exactly $D-1$ and say $g(x+iy):=G(x,y)$. Suppose $\gT(D)$ of $G$'s ``maximal monomials,'' i.e.\ those of the form $x^\ell y^m$, $\ell+m=D-1$, have nonzero coefficients. Then $G$ has no decomposition as
            \begin{equation}
                g(x+iy)=G(x,y)=\sum\limits_{1\le k\le E}P_k(x)Q_k(y) \label{eq:generalized}
            \end{equation}
            where $P_k,Q_k\in\C[z]$ for $1\le k\le E$, and $E\in o(D)$. 
        \end{proposition}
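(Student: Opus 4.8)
The plan is to recast the existence of a short decomposition \eqref{eq:generalized} as an upper bound on the rank of the coefficient matrix of $G$, and then to show that the hypothesis on maximal monomials forces this rank to be $\Theta(D)$. First I would associate to $G$ its coefficient matrix $C$, where $C_{\ell,m}$ is the coefficient of $x^\ell y^m$ in $G$; since $\deg G = D-1$, we have $C_{\ell,m}=0$ whenever $\ell+m>D-1$, so $C$ vanishes strictly above its main antidiagonal. The crucial observation is that the coefficient matrix of a single product $P_k(x)Q_k(y)$ is the outer product $\mathbf{p}_k\mathbf{q}_k^\top$ of the coefficient vectors of $P_k$ and $Q_k$, hence has rank at most $1$. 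Consequently a decomposition of the form \eqref{eq:generalized} with $E$ terms exhibits $C$ as a sum of $E$ rank-$1$ matrices, so that $\mathrm{rank}\,C\le E$. It therefore suffices to prove $\mathrm{rank}\,C\in\Omega(D)$, since this contradicts $E\in o(D)$.

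To lower-bound the rank I would extract an explicit nonsingular submatrix pinned to the maximal monomials. Let $S=\{\ell : C_{\ell,\,D-1-\ell}\neq 0\}$ index the nonzero maximal monomials; by hypothesis $s:=|S|\in\Theta(D)$. Enumerate $S$ as $\ell_1<\cdots<\ell_s$ and set $m_i:=D-1-\ell_i$, so that the $m_i$ are strictly decreasing. Consider the $s\times s$ submatrix $B$ with $B_{i,j}=C_{\ell_i,m_j}$. Its diagonal entries $B_{i,i}=C_{\ell_i,\,D-1-\ell_i}$ are nonzero by construction, while for $i>j$ we have $\ell_i+m_j=\ell_i+(D-1-\ell_j)>D-1$, placing $(\ell_i,m_j)$ strictly above the antidiagonal of $C$ and forcing $B_{i,j}=0$. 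Thus $B$ is triangular with nonzero diagonal, hence invertible, and $\mathrm{rank}\,C\ge\mathrm{rank}\,B=s\in\Theta(D)$, which contradicts $\mathrm{rank}\,C\le E\in o(D)$.

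The step I expect to require the most care is the reduction to the rank bound, precisely because the $P_k,Q_k$ in \eqref{eq:generalized} may individually have degree far exceeding $D-1$, with their top-degree contributions cancelling in the sum. This is exactly why a naive ``leading-term'' argument---matching each maximal monomial of $G$ to the product of leading coefficients of some $P_kQ_k$---does not immediately close, and it is what motivates passing to the coefficient-matrix formulation: rank is insensitive to the ambient dimension, so I may regard every $\mathbf{p}_k\mathbf{q}_k^\top$ and $C$ as finitely supported matrices of a common size exceeding each $\deg P_k$ and $\deg Q_k$, and the inequality $\mathrm{rank}\,C\le E$ then holds regardless of any cancellation among high-degree terms. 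Once this framework is fixed, the triangular-submatrix argument only ever references entries with indices at most $D-1$, so the possibility of large-degree $P_k,Q_k$ becomes harmless, and the two bounds on $\mathrm{rank}\,C$ collide to yield the claim.
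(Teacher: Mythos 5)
Your proof is correct and takes essentially the same route as the paper's: both reduce \eqref{eq:generalized} to the statement that the coefficient matrix of $G$ has rank at most $E$ (your sum of $E$ outer products $\mathbf{p}_k\mathbf{q}_k^\top$ is precisely the paper's factorization $\mathcal{A}\mathcal{B}^\top$, and both handle large $\deg P_k,\deg Q_k$ by zero-padding), then derive a contradiction from the antidiagonal of maximal-monomial coefficients. The only difference is in the lower bound: the paper assumes ``for clarity'' that all $D$ maximal coefficients are nonzero and takes the determinant of the full $D\times D$ matrix, whereas your triangular $s\times s$ submatrix handles the stated $\Theta(D)$ hypothesis directly, so your version of that step is actually the more complete one.
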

        For clarity, we assume that in fact all $D$ of $G$'s maximal monomials have nonzero coefficient; however, the result certainly holds for any constant fraction of $D$. 

        Notice that this result holds no matter how large the degrees of $P_k$ and $Q_k$ are allowed to be. 
        \begin{proof}
            Write 
            $$G(x,y)=\sum\limits_{\substack{\ell,m\ge0\\\ell+m<D}}c_{\ell,m}x^\ell y^m$$
            and suppose that such $P_k(x)=\sum\limits_{0\le j<D'}a_{k,j}x^j$ and $Q_k(y)=\sum\limits_{0\le j<D'}b_{k,j}y^j$ as in \eqref{eq:generalized} exist, with $\deg P_k,\deg Q_k<D'$ (any $D'\in\N$ is possible here) for all $1\le k\le E$. The $x^\ell y^m$ coefficient of \eqref{eq:generalized}'s right-hand side is \begin{equation}
                \sum\limits_{1\le k\le E}a_{k,\ell}b_{k,m}=c_{j,\ell}. \label{eq:dot}
            \end{equation}
            This is just a dog product! So we are inspired to consider the following matrices: 
            \begin{align*}
                \cA&:=\lpr{a_{k,j}}_{\substack{1\le k\le E\\0\le j<D'}}=\begin{pmatrix} a_{1,0} & a_{2,0} & \cdots & a_{E,0} \\ a_{1,1} & a_{2,1} & \cdots & a_{E,1} \\ \vdots & \vdots & \ddots & \vdots \\ a_{1,D'} & a_{2,D'} & \cdots & a_{E,D'} \end{pmatrix} \\ 
                \cB&:=\lpr{b_{k,j}}_{\substack{1\le k\le E\\0\le j<D'}}=\begin{pmatrix} b_{1,0} & b_{2,0} & \cdots & b_{E,0} \\ b_{1,1} & b_{2,1} & \cdots & b_{E,1} \\ \vdots & \vdots & \ddots & \vdots \\ b_{1,D'} & b_{2,D'} & \cdots & b_{E,D'} \end{pmatrix} \\
                \cC&:=\lpr{c_{\ell,m}}_{\substack{0\le\ell<D\\0\le m<D}}=\begin{pmatrix} c_{0,0} & c_{0,1} & \cdots & c_{0,D-1} \\ c_{1,0} & \iddots & \iddots \\ \vdots & \iddots \\ c_{D-1,0} \end{pmatrix}
            \end{align*}
            where in $\cC$, the unspecified entries are all 0. Then, by \eqref{eq:dot}, $\cA\cB^\top=\cC$. We have that $E\ll D$ so $\rank\cA,\rank\cB\le E$, thus $\rank\lpr{\cA\cB^\top}\le E\ll D$. 
            
            However, $\det\cC=(-1)^{\floor{\nicefrac{n}{2}}}\prod\limits_{0\le k<D}c_{k,D-1-k}\neq0$ by hypothesis. Therefore $\cC$ is full-rank, i.e.\ $\rank\cC=D-1$. This stands in contradiction to $\rank\cC=\rank\lpr{\cA\cB^\top}\in o(D)$. 
        \end{proof}
        We interpret this as follows. The sum $\sum\limits_{0\le k\le E}\lpr{\deg P_k+\deg Q_k}$ is, up to a small constant factor, the number of calls to the block-encoding of the given normal matrix for implementing the decomposition \eqref{eq:generalized} using the QSP/LCU approach described herein. If we choose to approximate $f$ by $p(z,\ol{z})\in\C[z,\ol{z}]$, then $p$ is characterized by $O(D^2)$ complex numbers (one for each term $z^\ell\ol{z}^m$ for $\ell+m<D$), so we end up using about $D$-many polynomials, many or possibly all of which have degree $\gT(D)$. 
        
\section{Chebyshev expansion of Chebyshev interpolants}\label{sec:cheb exp cheb int}

In \secref{sec:subnorm} and \secref{sec:gen subnorm factor} it becomes necessary to evaluate expressions of the form 
$$\int_{-1}^1T_j(x)L_n^{(m)}(x)\frac{\d x}{\sqrt{1-x^2}}.$$
Here, we turn this integral into a sum of $n-1$ terms. 

First, we compute the Chebyshev coefficients for monomials: 
\begin{lemma}\label{lem:cheb monomial ip}
Take integers $m,j\ge0$. Suppose $m\ge j$ and $m$ and $j$ share the same parity. Then, 
$$\int_{-1}^1T_j(x)x^m\frac{\d x}{\sqrt{1-x^2}}=\frac{\pi}{2^m}\binom{m}{\frac{m-j}{2}}.$$
Otherwise, the integral evaluates to 0. 
\end{lemma}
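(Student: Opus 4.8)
The plan is to reduce the weighted integral to a standard trigonometric integral via the substitution $x=\cos\theta$. Under this change of variables, $\sqrt{1-x^2}=\sin\theta$ for $\theta\in[0,\pi]$, $\d x=-\sin\theta\,\d\theta$, and the defining property $T_j(\cos\theta)=\cos(j\theta)$ of the Chebyshev polynomials applies, so the factor $\frac{\d x}{\sqrt{1-x^2}}$ collapses to $-\d\theta$ and the integral becomes
$$\int_{-1}^1 T_j(x)x^m\frac{\d x}{\sqrt{1-x^2}}=\int_0^\pi\cos(j\theta)\cos^m\theta\,\d\theta.$$

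Next I would expand $\cos^m\theta$ by the binomial theorem applied to $\cos\theta=\frac{e^{i\theta}+e^{-i\theta}}{2}$, giving $\cos^m\theta=2^{-m}\sum_{\ell=0}^m\binom{m}{\ell}e^{i(m-2\ell)\theta}$, and write $\cos(j\theta)=\frac12\lpr{e^{ij\theta}+e^{-ij\theta}}$. Since the integrand is even in $\theta$, I would first rewrite $\int_0^\pi=\frac12\int_{-\pi}^\pi$ so that the clean orthogonality relation $\int_{-\pi}^\pi e^{ik\theta}\,\d\theta=2\pi\delta_{k,0}$ is available. Multiplying out the two expansions and integrating term by term then leaves only those $\ell$ for which $m-2\ell\pm j=0$.

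The remaining step is bookkeeping of the surviving terms. The equation $m-2\ell+j=0$ forces $\ell=\frac{m+j}{2}$ and $m-2\ell-j=0$ forces $\ell=\frac{m-j}{2}$; each requires $m$ and $j$ to have the same parity in order for $\ell$ to be an integer, and requires $0\le\ell\le m$, i.e.\ $j\le m$, for the index to lie in the summation range---exactly the hypotheses of the nonzero case, and otherwise every term vanishes. Using the symmetry $\binom{m}{\frac{m+j}{2}}=\binom{m}{\frac{m-j}{2}}$, the two surviving contributions coincide, and collecting the prefactors $\frac12\cdot\frac{2\pi}{2^{m+1}}\cdot 2$ yields $\frac{\pi}{2^m}\binom{m}{\frac{m-j}{2}}$, as claimed.

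The only genuine subtlety---and the step I would treat most carefully---is the boundary case $j=0$, where the two indices $\frac{m+j}{2}$ and $\frac{m-j}{2}$ collide into the single value $\ell=m/2$; here one must check that the naive doubling remains correct, since $\cos(0\cdot\theta)=1$ already carries both exponentials with equal weight. A direct check confirms that the formula $\frac{\pi}{2^m}\binom{m}{m/2}$ persists at $j=0$ (for $m$ even, and giving $0$ for $m$ odd, consistent with the parity condition), so no separate case statement is needed in the final answer.
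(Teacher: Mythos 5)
Your proof is correct. It takes a formally different route from the paper's: the paper stays entirely inside polynomial algebra, extending the Chebyshev index to negative values by $T_{-n}:=T_n$ and iterating the recurrence $2xT_n(x)=T_{n+1}(x)+T_{n-1}(x)$ to obtain the expansion $(2x)^m T_j(x)=\sum_{0\le c\le m}\binom{m}{c}\,T_{m+j-2c}(x)$ (the paper's printed range ``$0\le c<a$'' is an off-by-one slip), after which orthogonality of the $T_k$ with respect to the weight $(1-x^2)^{-1/2}$ kills every term except the one with $m+j-2c=0$, yielding $\frac{\pi}{2^m}\binom{m}{\frac{m+j}{2}}=\frac{\pi}{2^m}\binom{m}{\frac{m-j}{2}}$. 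You instead substitute $x=\cos\theta$ and invoke Fourier orthogonality of the exponentials $e^{ik\theta}$ on $[-\pi,\pi]$. Under that substitution the two arguments collapse into the same identity---the binomial expansion of $\left(e^{i\theta}+e^{-i\theta}\right)^m\cos(j\theta)$ is exactly the closed form of the iterated recurrence---so the difference is one of bookkeeping rather than substance. What your version buys is explicitness: the parity condition, the range condition $0\le\ell\le m$ (equivalently $j\le m$), and the $j=0$ coincidence of the two surviving indices are all checked directly, whereas the paper dispatches the vanishing case in one sentence (``parity considerations'') and leaves the surviving-term count implicit. What the paper's version buys is a reusable polynomial identity stated in the Chebyshev basis itself, which is the form actually needed downstream in \propref{prop:int against cheb as sum}, and no passage through trigonometric integrals at all.
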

\begin{proof}
The ``otherwise'' case is trivial by parity considerations and since the Chebyshev polynomials span $\R[x]$. 

For convenience, define $T_{-n}:=T_n$ for $n>0$ (in contrast to the convention $T_n\equiv0$ sometimes used). The result then follows by iteratively multiplying $T_n$ by $2x$: $2xT_n(x)=T_{n+1}(x)+T_{n-1}(x)$, which holds regardless of $n$'s sign. In particular, $(2x)^aT_b(x)=\sum\limits_{0\le c<a}\binom{a}{c}T_{a+b-2c}(x)$. 
\end{proof}

\begin{remark}
It is a standard fact  that 
\begin{equation}
    T_n(x)=\frac{n}{2}\sum\limits_{k=0}^{\floor{\nicefrac{n}{2}}}(-1)^k\frac{2^{n-2k}}{n-k}\binom{n-k}{k}x^{n-2k}. \label{eq:explicit cheb}
\end{equation}
\end{remark}
\begin{proposition}\label{prop:int against cheb as sum}
    For the Chebyshev polynomial $T_n$, if $k$ shares $n$'s parity then let $a_{n,k}:=\frac{n}{2}(-1)^{\frac{n-k}{2}}\frac{2^kn}{n+k}\binom{\frac{n+k}{2}}{\frac{n-k}{2}}$, otherwise let $a_{n,k}:=0$. Put $f_j(y):=\sum\limits_{0\le\ell<j}a_\ell y^{j-\ell}$ (which satisfies $f_j(y)=yf_{j-1}(y)+a_j$ for $j>0$). Then, 
    $$\int_{-1}^1T_k(x)L_n^{(i)}(x)\frac{\d x}{\sqrt{1-x^2}}=\pi\sum\limits_{\substack{k\le j<n\\2\mid(j-k)}}\frac{f_j(t_{n,i})}{2^j}\binom{j}{\frac{j-k}{2}}.$$
\end{proposition}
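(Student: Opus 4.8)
The plan is to expand the Lagrange basis polynomial $L_n^{(i)}$ in the monomial basis and integrate against $T_k(x)(1-x^2)^{-1/2}$ term by term, invoking Lemma~\ref{lem:cheb monomial ip} for each monomial. Concretely, writing $L_n^{(i)}(x)=\sum_{0\le j<n}c_j\,x^j$ and recalling that $\int_{-1}^1 T_k(x)x^j\frac{\d x}{\sqrt{1-x^2}}$ vanishes unless $j\ge k$ and $j\equiv k\pmod 2$ (in which case it equals $\frac{\pi}{2^j}\binom{j}{(j-k)/2}$), linearity of the integral gives
$$\int_{-1}^1 T_k(x)L_n^{(i)}(x)\frac{\d x}{\sqrt{1-x^2}}=\pi\sum_{\substack{k\le j<n\\2\mid(j-k)}}\frac{c_j}{2^j}\binom{j}{\tfrac{j-k}{2}}.$$
Thus the whole statement collapses to a single claim: that the $x^j$-coefficient of $L_n^{(i)}$ is exactly $c_j=f_j(t_{n,i})$. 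The parity condition $2\mid(j-k)$ and the range $k\le j<n$ are then forced automatically, the latter because $\deg L_n^{(i)}=n-1$.

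To establish $c_j=f_j(t_{n,i})$ I would exploit the Chebyshev structure of the nodes. Since the $t_{n,\cdot}$ are Chebyshev points, the nodal polynomial $\prod_{\ell}(x-t_{n,\ell})$ is a scalar multiple of $T_n(x)$, so the Lagrange basis polynomial admits the closed form $L_n^{(i)}(x)=\frac{T_n(x)}{(x-t_{n,i})\,T_n'(t_{n,i})}$. Because $t_{n,i}$ is a root of $T_n$, this rewrites as a divided difference, $\frac{1}{T_n'(t_{n,i})}\cdot\frac{T_n(x)-T_n(t_{n,i})}{x-t_{n,i}}$. Substituting the explicit expansion $T_n(x)=\sum_\ell a_{n,\ell}x^\ell$ from \eqref{eq:explicit cheb} and using $\frac{x^\ell-y^\ell}{x-y}=\sum_{p+q=\ell-1}x^p y^q$, the coefficient of $x^j$ becomes a partial Horner sum $\sum_{\ell>j}a_{n,\ell}\,t_{n,i}^{\ell-1-j}$ in the Chebyshev coefficients. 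This is precisely $f_j$ evaluated at $y=t_{n,i}$, once the normalizing factor $T_n'(t_{n,i})$ and the indexing are folded into the definition of the $a_{n,\ell}$ and the Horner recursion $f_j(y)=yf_{j-1}(y)+a_j$.

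The hard part will be the bookkeeping in this last identification rather than any analytic difficulty: I must verify that the divided-difference coefficients obey exactly the stated Horner recursion (getting the direction of the recursion right and correctly absorbing the normalizing factor $T_n'(t_{n,i})$), and confirm that the explicit Chebyshev coefficients extracted from \eqref{eq:explicit cheb} agree term-for-term with the quantities $a_{n,k}$ declared in the statement. Once the equality $c_j=f_j(t_{n,i})$ is pinned down, the remaining step is purely mechanical: the integral splits over monomials, each monomial integral is supplied by Lemma~\ref{lem:cheb monomial ip}, and the nonzero contributions are exactly those surviving the parity and range constraints, yielding the asserted finite sum.
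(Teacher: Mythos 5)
Your proposal follows essentially the same route as the paper: its one-line proof is precisely to long-divide $x-t_{n,i}$ into $T_n(x)$ using the explicit expansion \eqref{eq:explicit cheb}---which is the same computation as your divided-difference/Horner identification of the monomial coefficients of $L_n^{(i)}$, since $T_n(t_{n,i})=0$---and then to integrate term by term via \lemref{lem:cheb monomial ip}. The bookkeeping you flag (absorbing the normalization $T_n'(t_{n,i})$ from the Lagrange denominator and aligning the indexing with the stated $a_{n,k}$ and the recursion for $f_j$) is indeed the only delicate point, and it is exactly what the paper's terse proof leaves implicit.
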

The advantage of this formulation is that it gives the integral in terms of a finite sum, in particular requiring $O(n)$ computations (since $f_k(t_{n,i})$ and $\binom{k}{0}=1$ can be computed initially as the base case for recursion). 
\begin{proof}
    From \eqref{eq:explicit cheb} we perform polynomial long division of $x-t_{n,i}$ into $T_n(x)$, and then use \lemref{lem:cheb monomial ip}. 
\end{proof}

\section{Nonlinear transformation of complex amplitudes primitives}\label{sec:ntca defs}
    
    In the interest of self-containedness, we present here the definitions of the gates $W$, $\tilde{G}$, and $\tilde{G}'$ as used in \secref{sec:apps}. 
    
    In \figref{fig:gmf 1}, $U$ is the prepare oracle for the starting state, i.e.\ 
    $$U\ket{0}^{\otimes n}=\sum\limits_{0\le k<n}c_k\ket{k},$$
    and $S$ and $H$ are the standard 1-qubit gates (in $\PU(2)$)
    \begin{align*}
        S=\begin{pmatrix}
            1&\\&i
        \end{pmatrix}, && H=\begin{pmatrix}
            1&\phantom{-}1\\1&-1
        \end{pmatrix}.
    \end{align*}
    (The function of the $S$ in $W'$ but not $W$ is to distinguish between the real and imaginary parts of the amplitudes.) The effect of $W$ is to establish 
    $$W\ket{k}\ket{0}^{\otimes n}\ket{0}=\ket{k}\lpr{\ket{\bv}\ket{+}+\ket{k}\ket{-}}$$
    (using the single-qubit states $\ket{\pm}=\frac{1}{\sqrt{2}}\lpr{\ket{0}\pm\ket{1}}$). 

\begin{figure}
    \centering
    \leavevmode
\Qcircuit @R=1em @C=1em {
& {/^n} \qw & \qw & \qw & \ctrl{1} & \qw & \qw & & & {/^n} \qw & \qw & \qw & \ctrl{1} & \qw & \qw & \qw \\
& {/^n} \qw & \gate{U} & \gate{U^\dag} & \targ & \qw & \qw & & & {/^n} \qw & \gate{U} & \gate{U^\dag} & \targ & \qw & \qw & \qw \\
& \qw & \gate{H} & \ctrl{-1} & \ctrl{-1} & \gate{H} & \qw & & & \qw & \gate{H} & \ctrl{-1} & \ctrl{-1} & \gate{S} & \gate{H} & \qw
    }
    \caption{The circuits for the gate $W$ (left) and $W'$ (right) \cite[Figure 1]{GMF}.}
    \label{fig:gmf 1}
\end{figure}
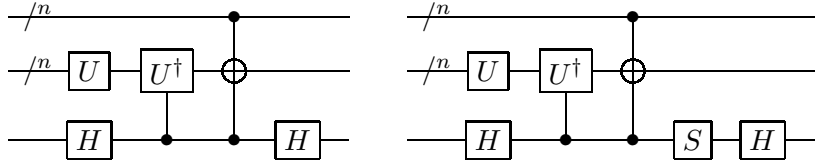

    In \figref{fig:gmf 2}, $Z$ refers to the standard 1-qubit gate (in $\PU(2)$) 
    $$Z=\begin{pmatrix}1\\&-1\end{pmatrix}$$ 
    and $S_0$ is reflection about the all-zeros state of the bottom two registers $\ket{0}^{\otimes n}\ket{0}$. The effect of $G$'s construction is to find an operator whose eigenstates include the explicitly-known eigenpairs
    $$\lpr{-\re c_k\pm i\sqrt{1-(\re c_k)^2},\frac{1}{\sqrt{2}}\ket{k}\lpr{\frac{1}{\ga_k}\lpr{\ket{\bv}+\ket{k}}\ket{0}\pm\frac{1}{\gb_k}\lpr{\ket{\bv}-\ket{k}}\ket{1}}}$$
    (for $\ga_k,\gb_k>0$ appropriate normalizing factors). 
    
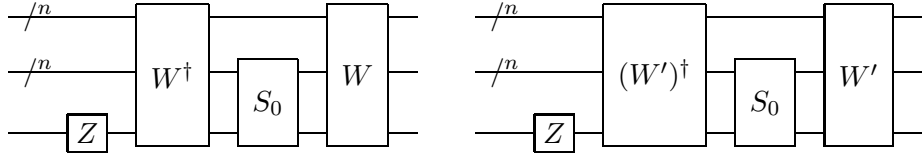
\begin{figure}
    \centering
    \leavevmode
\Qcircuit @R=1em @C=1em {
& {/^n} \qw & \qw & \multigate{2}{W^\dag} & \qw & \multigate{2}{W} & \qw & & & {/^n} \qw & \qw & \multigate{2}{(W')^\dag} & \qw & \multigate{2}{W'} & \qw \\
& {/^n} \qw & \qw & \ghost{W^\dag} & \multigate{1}{S_0} & \ghost{W} & \qw & && {/^n} \qw & \qw & \ghost{(W')^\dag} & \multigate{1}{S_0} & \ghost{W'} & \qw \\
& \qw & \gate{Z} & \ghost{W^\dag} & \ghost{S_0} & \ghost{W} & \qw & & & \qw & \gate{Z} & \ghost{(W')^\dag} & \ghost{S_0} & \ghost{W'} & \qw
    }
    \caption{The circuits for the gate $G$ (left) and $G'$ (right) \cite[Figure 2]{GMF}.}
    \label{fig:gmf 2}
\end{figure}

    Finally, in \figref{fig:gmf 3}, $X$ is the standard 1-qubit gate (in $\PU(2)$) 
    $$X=\begin{pmatrix}&1\\1\end{pmatrix};$$
    the circuit functions to give a LCU of $G$ and $G^\dag$ (or $G'$ and $(G')^\dag$), to eliminate any remaining imaginary parts. 

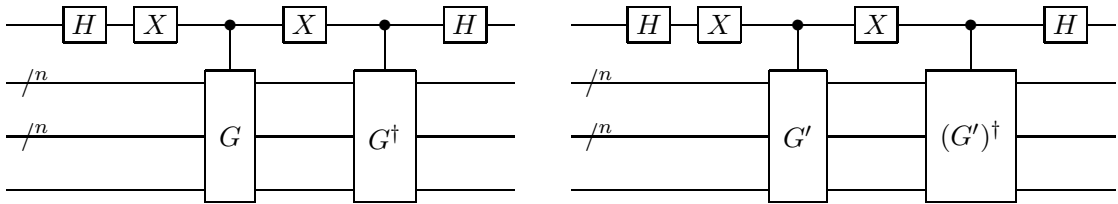
\begin{figure}
    \centering
    \leavevmode
\resizebox{\textwidth}{!}{
\Qcircuit @R=1em @C=1em {
& \qw & \gate{H} & \gate{X} & \ctrl{1} & \gate{X} & \ctrl{1} & \gate{H} & \qw & & & \qw & \gate{H} & \gate{X} & \ctrl{1} & \gate{X} & \ctrl{1} & \gate{H} & \qw \\
& {/^n} \qw & \qw & \qw & \multigate{2}{G} & \qw & \multigate{2}{G^\dag} & \qw & \qw & & & {/^n} \qw & \qw & \qw & \multigate{2}{G'} & \qw & \multigate{2}{(G')^\dag} & \qw & \qw \\
& {/^n} \qw & \qw & \qw & \ghost{G} & \qw & \ghost{G^\dag} & \qw & \qw & & & {/^n} \qw & \qw & \qw & \ghost{G'} & \qw & \ghost{(G')^\dag} & \qw & \qw \\
& \qw & \qw & \qw & \ghost{G} & \qw & \ghost{G^\dag} & \qw & \qw & & & \qw & \qw & \qw & \ghost{G'} & \qw & \ghost{(G')^\dag} & \qw & \qw
}
}
    \caption{The circuit for the gate $\tilde{G}$ (left) and $\tilde{G}'$ (right) \cite[Figure 3]{GMF}.}
    \label{fig:gmf 3}
\end{figure}


\begin{thebibliography}{}
\footnotesize

\bibitem[Aha03]{Aharonov} Dorit Aharonov, ``A Simple Proof that Toffoli and Hadamard are Quantum Universal,'' \href{https://arxiv.org/abs/quant-ph/0301040}{arXiv.org:quant-ph/0301040}, January 2003. 

\bibitem[CW12]{CW} Andrew M.\ Childs and Nathan Wiebe, ``Hamiltonian Simulation Using Linear Combinations of Unitary Operations,'' \href{https://arxiv.org/pdf/1202.5822.pdf}{arXiv:1202.5822}, February 2012. 

\bibitem[DMWL21]{DMWL} Yulong Dong, Xiang Meng, K.\ Birgitta Whaley, and Lin Lin, ``Efficient phase-factor evaluation in quantum signal processing,'' \href{https://arxiv.org/abs/2002.11649}{arXiv:2002.11649}, July 2021. 

\bibitem[GSLW18]{GSLW} András Gily\'en, Yuan Su, Guang Hao Low, and Nathan Wiebe, ``Quantum singular value transformation and beyond: exponential improvements for quantum matrix arithmetics,'' \href{https://arxiv.org/abs/1806.01838}{arXiv:1806.01838}, June 2018.

\bibitem[Gro96]{Grover} Lov K.\ Grover, ``A fast quantum mechanical algorithm for database search,'' \href{https://dl.acm.org/doi/10.1145/237814.237866}{\em STOC}, July 1996. 

\bibitem[GMF21]{GMF} Naixu Guo, Kosuke Mitarai, and Keisuke Fujii, ``Nonlinear transformation of complex amplitudes via quantum singular value transformation,'' \href{https://arxiv.org/abs/2107.10764}{arXiv:2107.10764}, July 2021. 

\bibitem[Haa19]{Haah} Jeongwan Haah, ``Product Decomposition of Periodic Functions in Quantum Signal Processing,'' \href{https://quantum-journal.org/papers/q-2019-10-07-190/#}{\em Quantum}, October 2019. 

\bibitem[HHL08]{HHL} Aram W.\ Harrow, Avinatan Hassidim, and Seth Lloyd, ``Quantum algorithm for solving linear systems of equations,'' \href{https://arxiv.org/abs/0811.3171}{arXiv:0811.3171}, November 2008. 

\bibitem[Lin21]{Lin} Lin Lin, \href{https://math.berkeley.edu/~linlin/qasc/qasc_notes.pdf}{\em Lecture Notes on Quantum Algorithms for Scientific Computation}, March 2022. 

\bibitem[LC16]{LC} Guang Hao Low and Isaac Chuang, ``Optimal Hamiltonian Simulation by Quantum Signal Processing,'' \href{https://arxiv.org/abs/1606.02685}{arXiv:1606.02685}, June 2016. 

\bibitem[MRTC21]{MRTC} John M.\ Martyn, Zane M.\ Rossi, Andrew K.\ Tan, and Isaac L.\ Chuang, ``A Grand Unification of Quantum Algorithms,'' \href{https://arxiv.org/pdf/2105.02859.pdf}{arXiv:2105.02859}, December 2021. 

\bibitem[Ple94]{Plesniak} Wies\l{}aw Ple\'sniak, ``Extension and polynomial approximation of ultradifferentiable functions in $\R^N$,'' {\em Bulletin de la Soci\'et\'e Royale des Sciences de Li\`ege}, 1994. 

\bibitem[Riv69]{Rivlin} Theodore Rivlin, {\em An Introduction to the Approximation of Functions}, 1969. 

\bibitem[RC22]{RC} Zane M.\ Rossi and Isaac L.\ Chuang, ``Multivariable quantum signal processing (M-QSP): prophecies of the two-headed oracle,'' \href{https://quantum-journal.org/papers/q-2022-09-20-811/pdf/}{\em Quantum}, September 2022. 

\bibitem[Shi02]{Shi} Yaoyun Shi, ``Both Toffoli and Controlled-NOT need little help to do universal quantum computation,'' \href{https://arxiv.org/abs/quant-ph/0205115}{arXiv:quant-ph/0205115}, May 2002. 

\bibitem[TOSU20]{TOSU} Souichi Takahira, Asuka Ohashi, Tomohiro Sogabe, and Tsuyoshi Sasaki Usuda, ``Quantum algorithm for matrix functions by Cauchy's integral formula,'' {\em Quantum Information \& Computation}, \href{https://arxiv.org/abs/2106.08075}{arXiv:2106.08075}, February 2020. 

\bibitem[TOSU21]{TOSU 2} Souichi Takahira, Asuka Ohashi, Tomohiro Sogabe, and Tsuyoshi Sasaki Usuda, ``Quantum Algorithms based on the Block-Encoding Framework for Matrix Functions by Contour Integrals,'' \href{https://arxiv.org/abs/2106.08076}{arXiv:2106.08076}, June 2021. 

\end{thebibliography}
\end{document}